\def\ps@pprintTitle{%
\let\@oddhead\@empty
\let\@evenhead\@empty
\def\@oddfoot{\centerline{\thepage}}%
\let\@evenfoot\@oddfoot}
\journal{Physica A}
\DeclareMathOperator*{\argmax}{argmax}
\newtheorem{theorem}{Theorem}[section]
\newtheorem{thm-defn}[theorem]{Theorem/Definition}
\newtheorem{prop}[theorem]{Proposition}
\theoremstyle{definition}
\theoremstyle{remark}
\DeclareMathOperator{\E}{\mathbb{E}}
\newcommand{\ignore}[1]{}{}
\begin{document}

\begin{frontmatter}

\title{Global and regional changes in carbon dioxide emissions: 1970-2019}
   
\author[label1]{Nick James} \ead{nick.james@unimelb.edu.au}
\author[label2]{Max Menzies} \ead{max.menzies@alumni.harvard.edu}
\address[label1]{School of Mathematics and Statistics, University of Melbourne, Victoria 3010, Australia}
\address[label2]{Beijing Institute of Mathematical Sciences and Applications, Tsinghua University, Beijing 101408, China}

\begin{abstract}
We introduce new frameworks to study spatio-temporal patterns in carbon dioxide emissions, demographic trends and economic patterns across 50 countries over the past 50 years. Our analysis is broken up into four sections. First, we introduce a new method to classify countries into one of three characteristic emissions classes based on a one, two or three-segment piecewise linear model. We reveal that most countries are best represented by a piecewise linear model with one change point. Next, we perform a decade-by-decade study of carbon dioxide trajectories. There, we demonstrate notable changes in cluster structures in each decade. We then study the spatial propagation of emissions over time, highlighting a peak in spatial dispersion in 2000, beyond which there has been a gradual decline in spatial emissions variance over space. Finally, we use carbon dioxide, GDP, and population data and apply dimensionality reduction and clustering to group countries based on similarity in their real and carbon economies.

\end{abstract}

\begin{keyword}
Carbon dioxide emissions \sep Time series analysis \sep Geographic variance \sep Dimensionality reduction \sep Clustering \sep Change point detection

\end{keyword}

\end{frontmatter}

\section{Introduction}
\label{Introduction}

Currently, the greatest threat to our planet and the long-term prosperity of human life may be climate change. Climate change refers to long-term structural shifts in weather patterns and temperatures, largely driven by human activities such as the burning of fossil fuels and emission of carbon dioxide (CO$_2$). The long term effects of climate change include ocean warming, rising sea levels, and ice melt. Climatic changes have caused an estimated 150 000 deaths per year since 2002 \cite{WHO_CO2}, and the World Health Organization predicts that this level will rise to 250 000 deaths per year between 2030 and 2050 \cite{WHO_CO3}. Thus, efforts by different governments to reduce CO$_2$ emissions are of the utmost importance for mitigating this ongoing crisis. Mathematical analysis of CO$_2$ emissions over time may identify countries that have been particularly successful in reducing their emissions and provide exemplars for other countries.

Existing research on CO$_2$ emissions is broad, with diverse scientific focuses and methodologies. First, the underlying drivers of such emissions, often related to population dynamics, urbanisation and other societal factors has been covered broadly in the literature \cite{Ribeiro2019,Adams2019,SaintAkadiri2020,Shindell2018,Zhou2019,Dowell2020,NechitaBanda2018,Serrenho2017,Bowerman2011}. In particular, emissions are closely tied to economic growth, so their relationship has been studied closely \cite{Wang2019,Naz2018,Tong2020,Bekun2019}. Numerous approaches have been taken to forecasting emissions \cite{Liu2022,Mason2018,Sutthichaimethee2018,Chiu2020}, including country-specific trends \cite{Tbelmann2020,Peters2019} and offering solutions to the climate change crisis, such as carbon storage \cite{Holloway2007,Zelikova2020}. Some papers have even studied spatio-temporal trends in CO$_2$ emissions \cite{Gao2020,Hu2020,Roobaert2019}. Our paper takes a different approach, focusing on a thorough and descriptive analysis of trends in emissions on a country-by-country basis over the past 50 years. We focus our analysis on segmenting this entire window of analysis into smaller periods, revealing times of the greatest growth in emissions, the times of the largest geographic prevalence of emissions, and relationships with economic and population indices.

Our analysis relies on a broad existing literature of time series analysis as used by applied mathematicians and statisticians. Such methods have been applied widely within the physical and social sciences, including epidemiology  \cite{james2021_CovidIndia,Chowell2016,Manchein2020,james2021_TVO,Blasius2020,James2021_virulence}, finance \cite{arjun,Drod2020_entropy,Jamesfincovid,Drod2021_entropy,james2021_mobility,james2021_crypto2,james2022_stagflation,james_georg}, and other fields \cite{Vazquez2006,Mendes2018,james2021_hydrogen,Shang2020,james2021_olympics,james2022_guns}. There are a variety of time series analysis frameworks that we draw upon, including parametric models \cite{Hethcote2000,Perc2020}, statistical analysis \cite{james2021_spectral}, techniques built upon distance metrics and similarity, \cite{james2021_MJW,Moeckel1997,Szkely2007,Mendes2019,James2020_nsm}, network analysis \cite{Karaivanov2020,Ge2020,Xue2020}, clustering \cite{Machado2020}, and others \cite{Ngonghala2020,Cavataio2021,Nraigh2020,Glass2020}.

This paper proceeds as follows. In Section \ref{sec:Characteristic_classes}, we introduce a new method to identify characteristic emissions classes based on one of three piecewise linear models that best represents their emissions profile over the past 50 years. Next, Section \ref{sec:Trajectories} investigates this evolution more precisely, studying the collective similarity in emissions trajectories on a decade-by-decade basis between 1970 and 2019. We then explore the spatial propagation of emissions over time in Section \ref{sec:Geodesic_variance}, showing that CO$_2$ emissions around the world became more spatially disparate until 2000, followed by increased geographic concentration, primarily driven through China's rapid growth. Finally, Section \ref{sec:Real_carbon_economies} compares different countries' real and carbon economies via dimensionality reduction and clustering techniques. We conclude in Section \ref{sec:Discussion}.


\section{Characteristic emissions classes}
\label{sec:Characteristic_classes}

Throughout this paper, we the $n=50$ countries with the greatest total CO$_2$ emissions as of 2019. We list these alphabetically and index them $i=1,...,n$. We study yearly data across $T=50$ years, indexed $t=1,...,50$. Let $x_i(t) \in \mathbb{R}$ be the multivariate time series of the CO$_2$ emissions of country $i$ in year $t$.

In this section, we introduce a new method to classify countries based on their emissions trends over time. Having noticed that many countries exhibit a piecewise linear trend in their CO$_2$ emissions, we introduce a new framework to determine the most appropriate model for each country. We assume that each country's emission behaviours over time are well represented by one of the three following models:
\begin{enumerate}
    \item Model 0 ($M_0$): a linear model with no change points.
    \item Model 1 ($M_1$): a piecewise linear model with one change point (two piecewise linear components).
    \item Model 2 ($M_2$): a piecewise linear model with two change points (three piecewise linear components).
\end{enumerate}

In our algorithmic procedure, we select the optimal number and placement of up to two change points, such that the average $R^2$ of the various fits (one more than the number of change points) is maximised. Given a preference for a simpler model in the circumstance where a more complex one accounts for a similar level (or marginally more) explanatory variance, we introduce a slight penalty for model complexity. Specifically, suppose we have a single country's CO$_2$ emissions $x(t)$ over a period $t=1,...,T$. We maximise the value of the average $R^2$ across piecewise linear models as follows:
\begin{align}
\label{eq:optimalR2}
\argmax_{0 \leq K \leq 2} \argmax_{1=\xi_{0}<...<\xi_{K+1}=T}  \frac{1}{K} \sum_{i=0}^K R^2_{[\xi_{i},   \xi_{i+1}]} - \alpha K;\\
R^2_{[a,b]}= \max_{\beta_0,\beta_1} 1 - \frac{\sum_{t=a}^b (x(t)-(\beta_0+\beta_1t))^2}{\sum_{t=a}^b (x(t) - \bar{x})^2 }.
\end{align}
In the equation above, $\alpha>0$ is a penalty term, mildly penalising increasing the number of change points $K$ (whereby the number of segments is $K+1$). Having inspected the data, we impose a mild degree of regularisation, setting $\alpha=0.01$. As the value of $R^2$ is commonly reported as a percentage, this penalty imposes just a 1\% penalty on increasing the number of permissible segments.

We remark that the linear models above do not impose the assumption of continuity of the linear fits between segments. While this might seem surprising at first, we believe this is well supported by examining the eventual fits on the data, where clear discontinuities in emissions emerge. This is supported in the discussion of Figure \ref{fig:Optimal_model_change_points} below.

\begin{table*}[ht]
\centering
\begin{tabular}{ |p{2.15cm}|p{1.5cm}|p{3.3cm}|p{1.5cm}|}
 \hline
 Country & Best model & Country & Best model  \\
 \hline
 Algeria & $M_0$ & Morocco & $M_1$ \\
 Argentina & $M_1$ & Netherlands & $M_2$ \\
 Australia & $M_0$ & Nigeria & $M_1$  \\
 Austria & $M_2$ & Oman & $M_1$ \\
 Bangladesh & $M_1$ & Pakistan & $M_1$ \\
 Belgium & $M_2$ & Phillippines & $M_1$ \\
 Brazil & $M_1$  & Poland & $M_2$ \\
 Canada & $M_0$ & Qatar & $M_1$ \\
 Chile & $M_0$ & Romania & $M_1$ \\
 China & $M_2$ & Russia  & $M_2$ \\
 Colombia & $M_1$ & Saudi Arabia & $M_1$ \\
 Czechia & $M_1$ & South Africa & $M_0$ \\
 Egypt & $M_1$ & South Korea & $M_0$\\
 France & $M_2$ & Spain & $M_2$ \\
 Germany & $M_0$ & Taiwan & $M_0$ \\
 India & $M_2$ & Thailand & $M_0$ \\
 Indonesia & $M_1$ & Turkey & $M_0$ \\
 Iran & $M_1$ & Turkmenistan & $M_2$ \\
 Iraq & $M_1$ & Ukraine & $M_2$ \\
 Italy & $M_2$ & United Arab Emirates & $M_1$ \\
 Japan & $M_1$ & United Kingdom & $M_2$ \\
 Kazakhstan & $M_2$ & United States & $M_1$ \\
 Kuwait & $M_0$ &  Uzbekistan &  $M_2$ \\ 
 Malaysia & $M_1$ &  Venezuela &  $M_1$ \\ 
 Mexico & $M_1$ & Vietnam  & $M_2$  \\ 
\hline
\end{tabular}
\caption{Country and optimal piecewise linear models, defined in Section \ref{sec:Characteristic_classes}. The optimal model (and placement of change points) is determined by optimising a penalised averaged $R^2$ as in Eq. (\ref{eq:optimalR2}).}
\label{tab:Country_optimal_models}
\end{table*}

\begin{figure*}
    \centering
    \begin{subfigure}[b]{0.48\textwidth}
        \includegraphics[width=\textwidth]{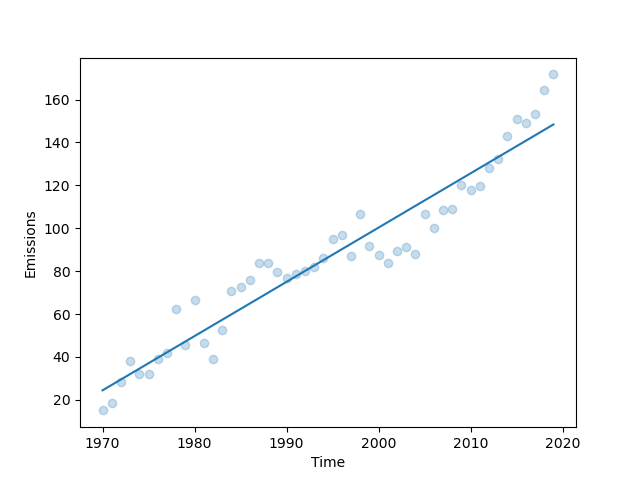}
        \caption{}
    \label{fig:Algeria_optimal_0}
    \end{subfigure}
    \begin{subfigure}[b]{0.48\textwidth}
        \includegraphics[width=\textwidth]{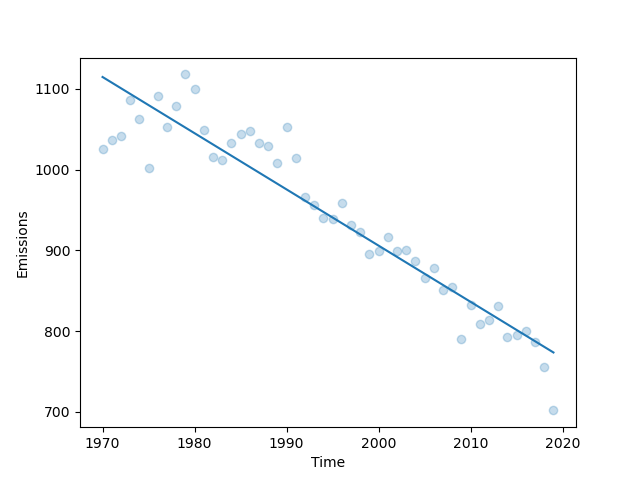}
        \caption{}
    \label{fig:Germany_optimal_0}
    \end{subfigure}
    \begin{subfigure}[b]{0.48\textwidth}
        \includegraphics[width=\textwidth]{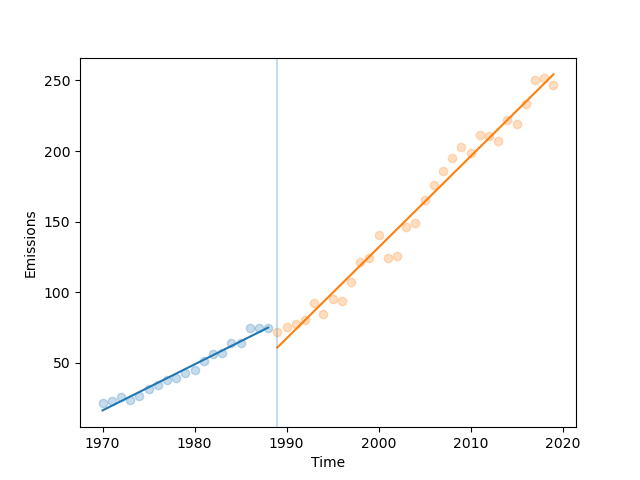}
        \caption{}
    \label{fig:Egypt_optimal_1}
    \end{subfigure}
    \begin{subfigure}[b]{0.48\textwidth}
        \includegraphics[width=\textwidth]{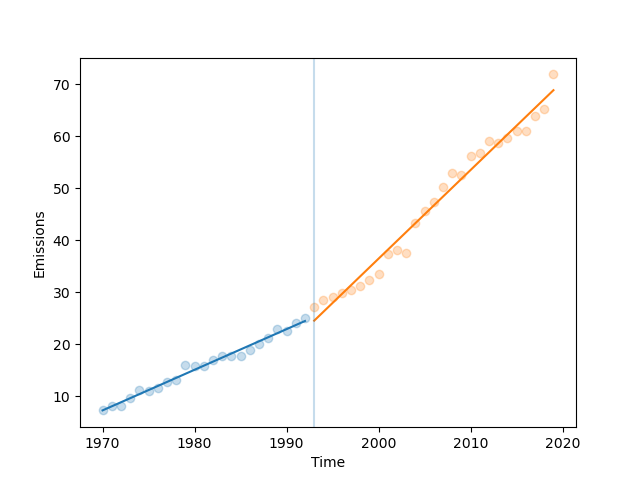}
        \caption{}
    \label{fig:Morocco_optimal_1}
    \end{subfigure}
    \begin{subfigure}[b]{0.48\textwidth}
        \includegraphics[width=\textwidth]{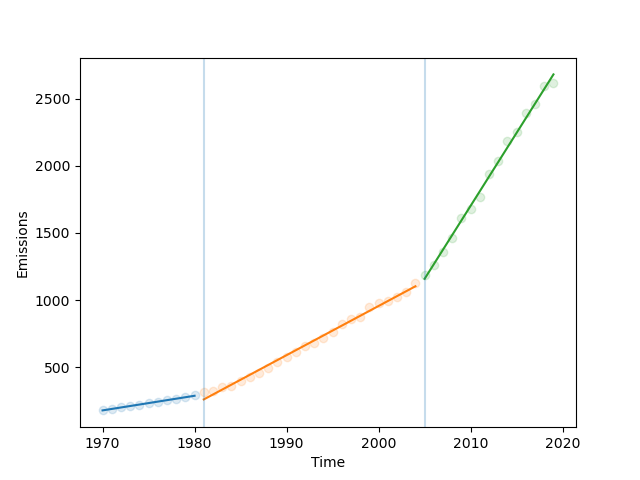}
        \caption{}
    \label{fig:India_optimal_2}
    \end{subfigure}
    \begin{subfigure}[b]{0.48\textwidth}
        \includegraphics[width=\textwidth]{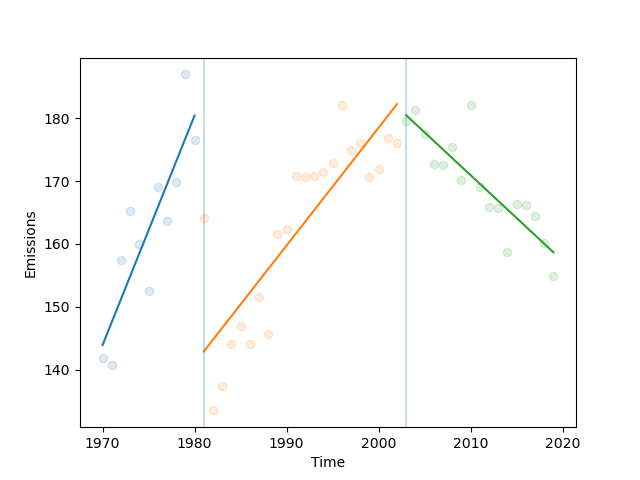}
        \caption{}
    \label{fig:Netherlands_optimal_2}
    \end{subfigure}
     \caption{Country emissions (in billions of metric tonnes) and optimal piecewise linear fits. We provide two examples from each representative model.  (a) Algeria and (b) Germany are best represented by model $M_0$, (c) Egypt and (d) Morocco by model $M_1$, and (e) India and (f) the Netherlands by model $M_2$.}
    \label{fig:Optimal_model_change_points}
\end{figure*}  

We document the optimal model for each country in Table \ref{tab:Country_optimal_models}. There we see the most frequently occurring model is $M_1$ (with 23 countries), followed by $M_2$ (16 countries) and then $M_0$ (11 countries). That is, in the vast majority of cases, CO$_2$ emissions are best modelled by a piecewise linear model with two or three segments, indicating some heterogeneity over time in the growth rate of emissions.

In Figure \ref{fig:Optimal_model_change_points}, we display two representative countries for each model. In Figures \ref{fig:Algeria_optimal_0} and \ref{fig:Germany_optimal_0}, respectively, we display trends for Algeria and Germany, revealing one persistent linear trend over the past 50 years. Algeria exhibits consistently increasing emissions, while Germany exhibits linearly decreasing emissions over the past 50 years. Egypt (Figure \ref{fig:Egypt_optimal_1}) and Morocco (Figure \ref{fig:Morocco_optimal_1}) both exhibit a more strongly positive linear increase beyond the late 1980s and early 1990s respectively. Fittingly, these countries are best modelled by one change point. Both India and the Netherlands, Figures \ref{fig:India_optimal_2} and \ref{fig:Netherlands_optimal_2}, respectively, require two change points (three segments) to best model their dynamics, fitting the numerous changes in their emissions' evolution over time. India exhibits clear continuity and increases in its emissions over time, which are appropriately captured by the adjoined linear fits. On the other hand, the Netherlands features clear discontinuities and abrupt changes in total emissions, seen in the early 1980s and early 2000s, respectively.

\section{Emissions temporal partition analysis}
\label{sec:Trajectories}

\begin{figure*}
    \centering
    \begin{subfigure}[b]{\textwidth}
        \includegraphics[width=\textwidth]{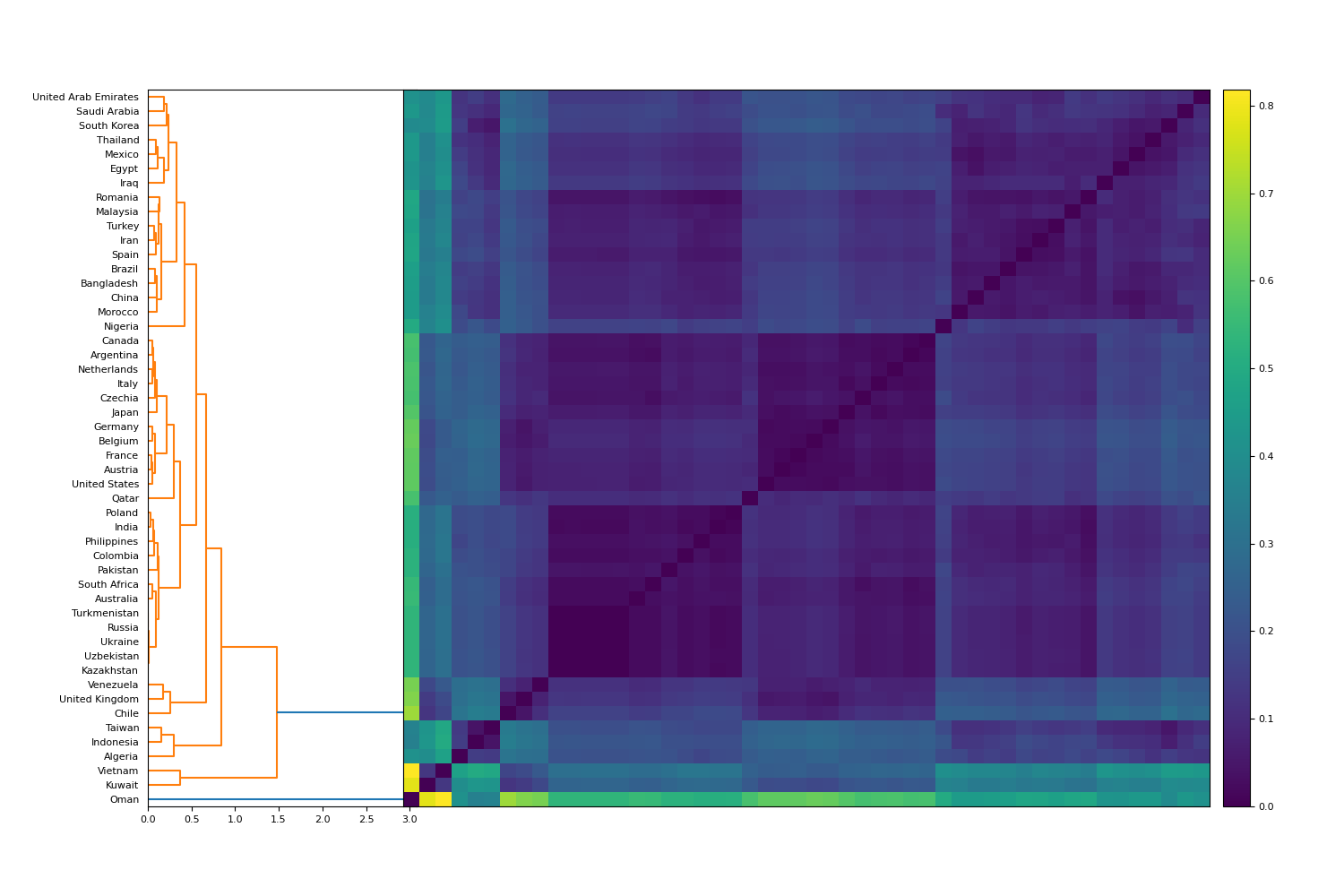}
        \caption{}
    \label{fig:Trajectory_1970_1979}
    \end{subfigure}
    \begin{subfigure}[b]{\textwidth}
        \includegraphics[width=\textwidth]{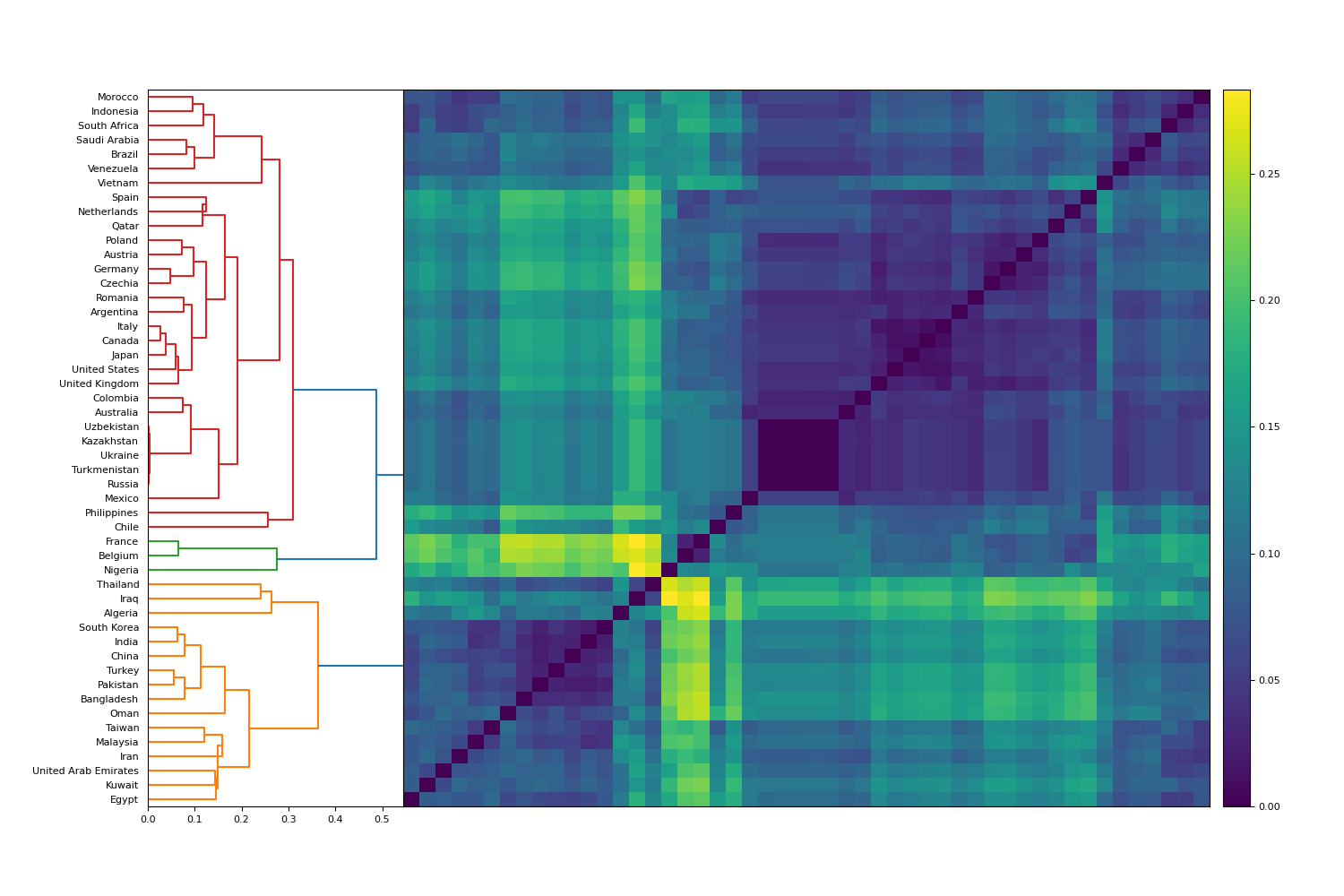}
        \caption{}
    \label{fig:Trajectory_1980_1989}
    \end{subfigure}
\end{figure*}
\begin{figure*}\ContinuedFloat
    \begin{subfigure}[b]{\textwidth}
        \includegraphics[width=\textwidth]{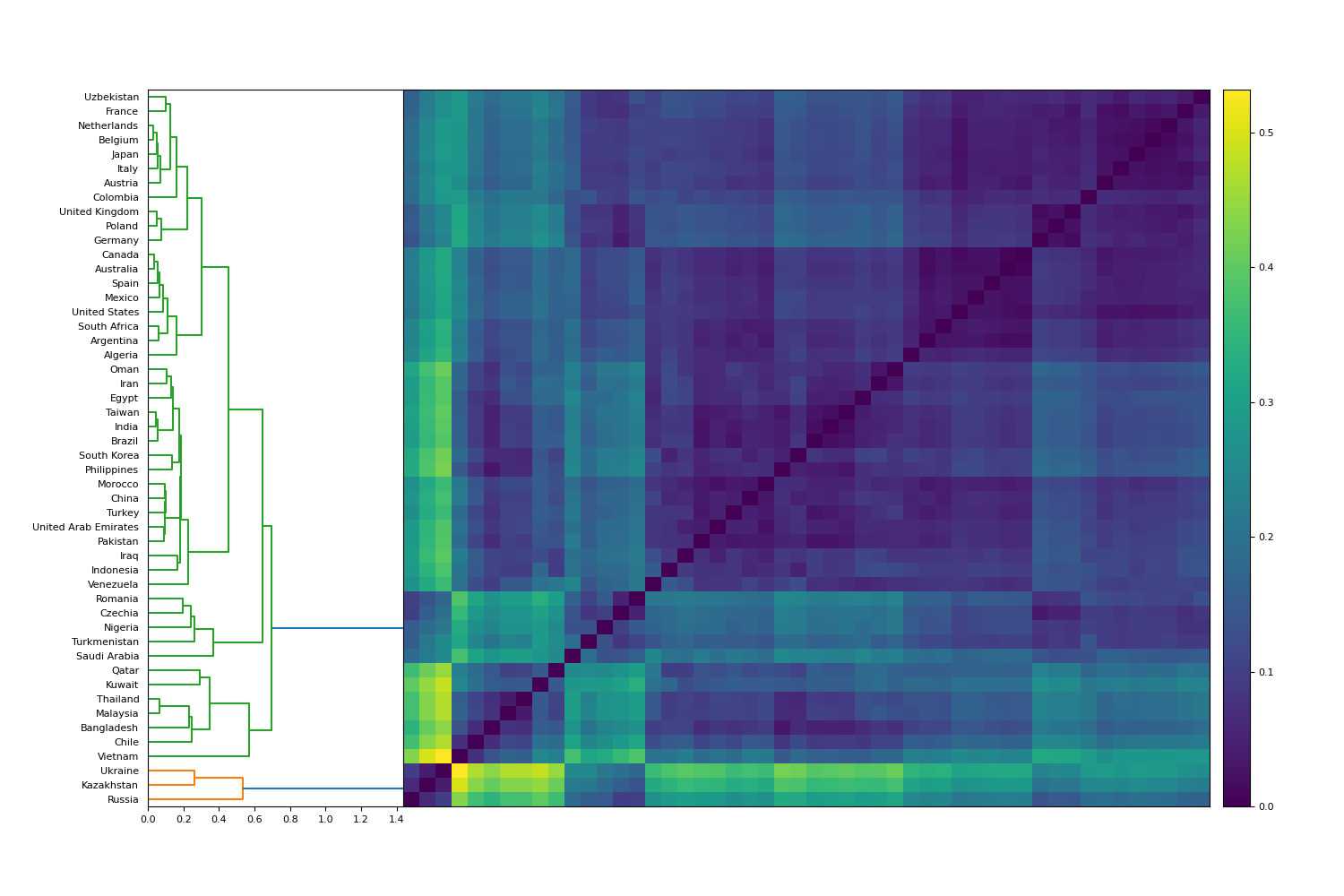}
        \caption{}
    \label{fig:Trajectory_1990_1999}
    \end{subfigure}
    \begin{subfigure}[b]{\textwidth}
        \includegraphics[width=\textwidth]{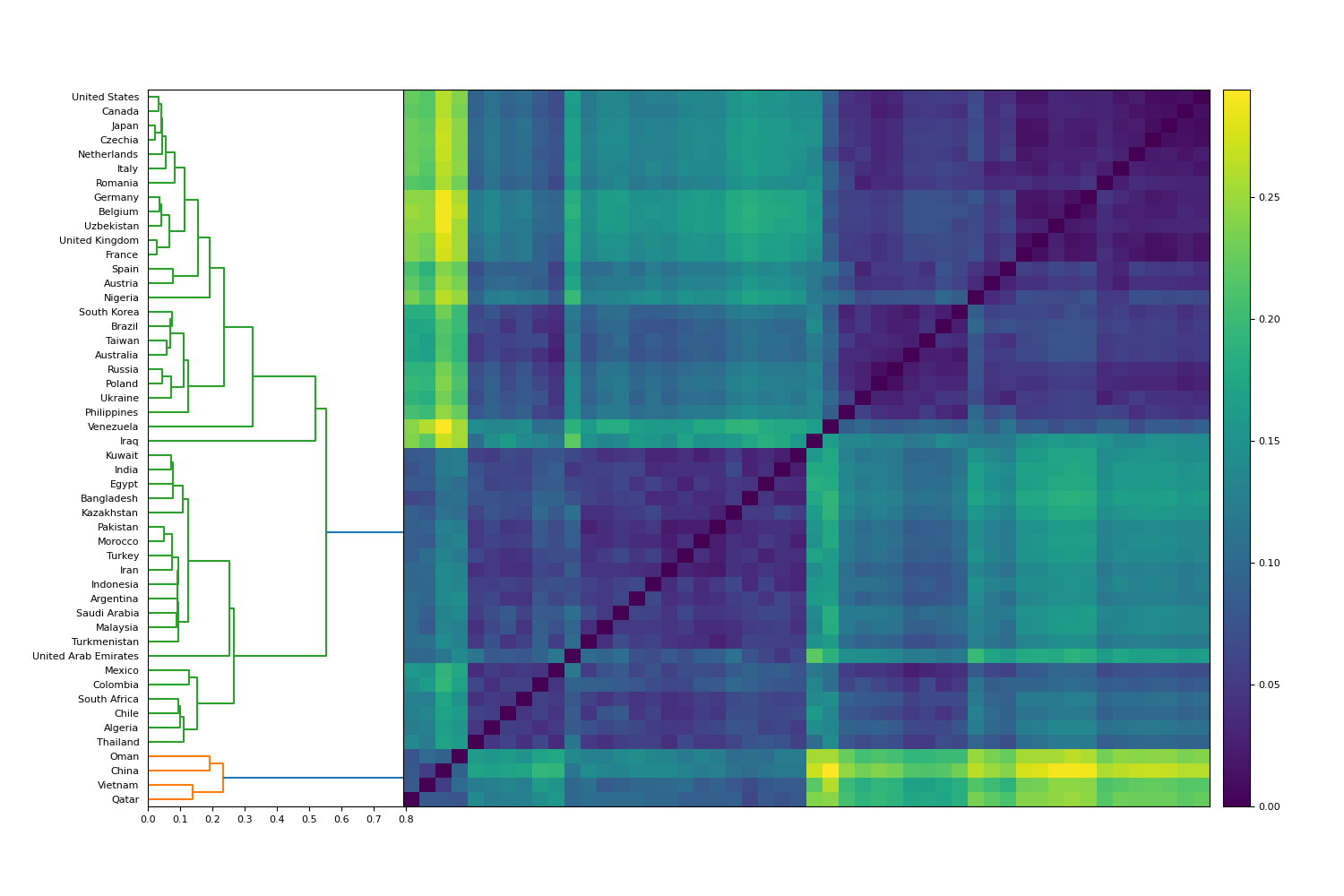}
        \caption{}
    \label{fig:Trajectory_2000_2009}
    \end{subfigure}
\end{figure*}
\begin{figure*}\ContinuedFloat
    \begin{subfigure}[b]{\textwidth}
        \includegraphics[width=\textwidth]{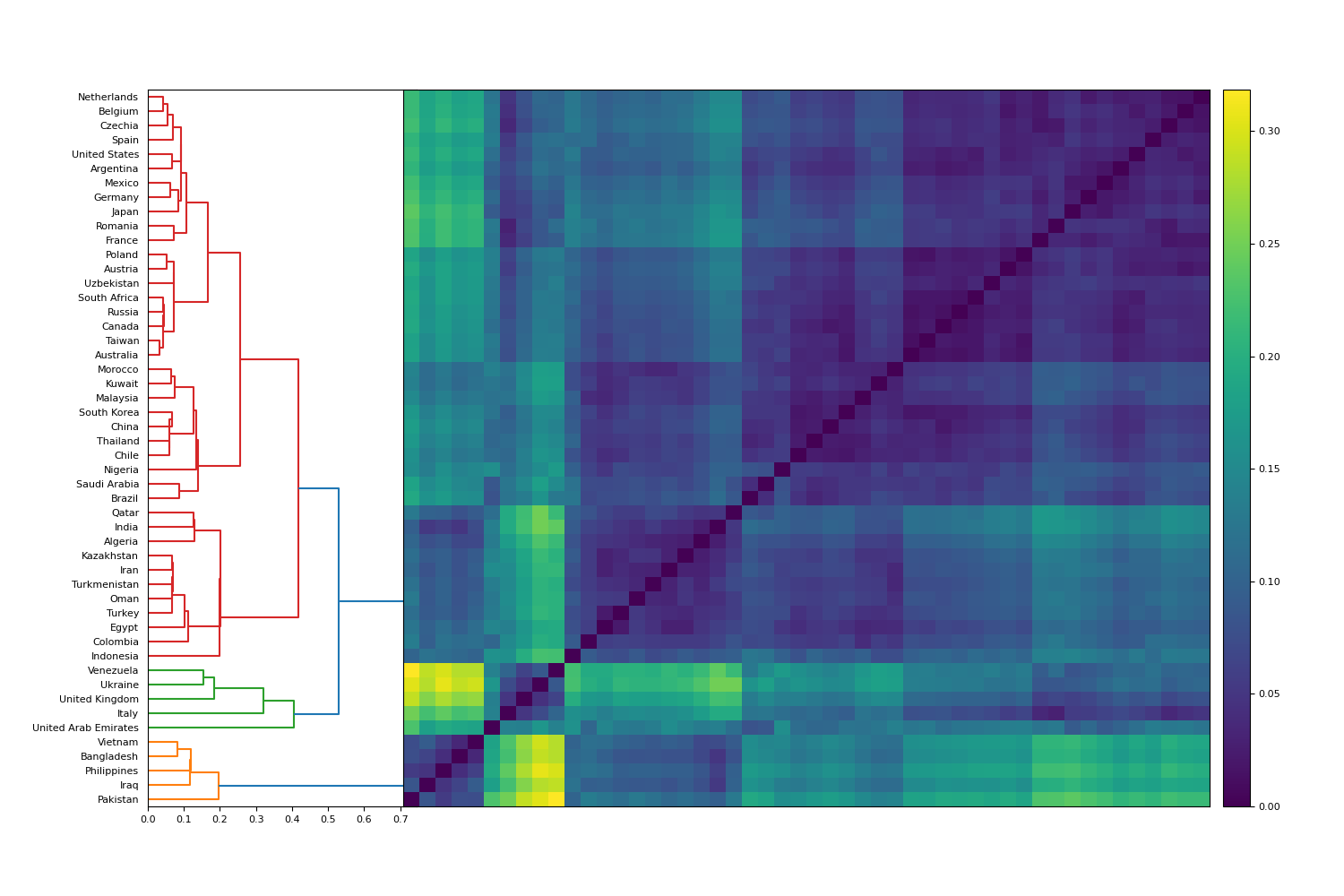}
        \caption{}
    \label{fig:Trajectory_2010_2019}
    \end{subfigure}
    \caption{Hierarchical clustering applied to normalised CO$_2$ emissions trajectories on a decade-by-decade basis. (a) 1970-1979, (b) 1980-1989, (c) 1990-1999, (d) 2000-2009, (e) 2010-2019.}
    \label{fig:Trajectory_clustering_decade}
\end{figure*}  

In this section, we partition country emissions by decade and investigate on a decade-by-decade basis, exploring the evolutionary structure of country behaviours. We restrict each country's emission values to a particular decade, giving a sequence $\mathbf{g}_i=({g}_i(1), {g}_i(2), \dots, {g}_i(m)) \in \mathbb{R}^m$, where $m=10$ is the number of years in that decade, and $i=1,\dots, n$ represent country index. Let $||\mathbf{g}_i||=\sum_{t=1}^m |{g}_i(t)|$ be the $L^1$ norm of the vector $\mathbf{g}_i$. As all ${g}_i(t)$ are non-negative, this counts the total emissions observed in a decade. We may then define $\mathbf{h}_i= \frac{\mathbf{g}_i}{||\mathbf{g}_i||}$. The vectors $\mathbf{h}_i$ reflect relative changes in emissions within a decade.  For example, a country whose emissions in a decade differ between 1000 and 1100 units (of mass) will present a relatively flat normalised trajectory, whereas a country whose emissions in a decade rise from 0 to 100 units will present a more steeply increasing normalised trajectory, as a reflection of the relative change. We define \emph{trajectory distance matrices} $D_{ij}=||\mathbf{h}_i-\mathbf{h}_j||$ that measure distance between these normalised trajectories.

In Figures \ref{fig:Trajectory_1970_1979}, \ref{fig:Trajectory_1980_1989}, \ref{fig:Trajectory_1990_1999}, \ref{fig:Trajectory_2000_2009} and \ref{fig:Trajectory_2010_2019}, we implement hierarchical clustering on these matrices for the ten-year periods of 1970-1979, 1980-1989, up to 2010-2019. Both the number of clusters and cluster constituents are dynamic, with both varying as we proceed forward in time.

First, during 1970-1979, we observe a predominant cluster, consisting of three subclusters, and a small collection of outlier countries (primarily displayed as separate subclusters within the predominant cluster). All subclusters within the predominant cluster exhibit increasing trends of CO$_2$ emissions over the decade. The first subcluster consists of countries such as Russia (Figure \ref{fig:Russia_emissions}) and Ukraine - these countries are characterised by huge growth and accelerating emissions trends. The second subcluster, consisting of Italy, Canada and Argentina, exhibits moderate growth in CO$_2$ emissions. The final cluster, which includes Spain and Brazil, also displays relatively steady growth behaviours. The outlier countries, such as Vietnam and Kuwait, exhibit declining CO$_2$ emissions, which are anomalous with respect to the rest of the collection.

Next, we turn to the period 1980-1989. This period produces a dendrogram consisting of three distinct clusters. The first cluster contains countries such as India (\ref{fig:India_emissions}) and China (\ref{fig:China_emissions}), which display continued growth throughout the period. The second cluster consists of various countries, but displays pronounced similarity between Eastern Europe and Central Asia, countries such as Kazakhstan, Russia and Ukraine. These countries all experienced huge growth in emissions, peaking around 1990. The final cluster consists of France, Belgium and Nigeria. These countries display erratic emissions behaviours, with all trajectories displaying limited trend and substantial volatility.

In the 1990-1999 period, countries form one primary cluster, with a small collection of outlier countries in a separate, significantly smaller cluster. The primary cluster consists of countries that displayed consistent growth over the prior decade (to varying extents). The outlier cluster consists of countries such as Ukraine, Russia and Kazakhstan - these all experienced precipitous drops in their emissions at the beginning of the period, and continuing decline throughout the decade.

The 2000-2009 period produces one primary trajectory cluster (consisting of two similarly sized subclusters) and a collection of outlier countries. The clear bifurcation in the large cluster is indicative of contrasting trends in emissions behaviours between various countries. The first subcluster consists of countries such as the Netherlands, Italy, Germany, the United States (Figure \ref{fig:USA_emissions}), Canada, Japan (\ref{fig:Japan_emissions}) and Belgium. Most countries within this subcluster are more developed and have taken a stronger stance in introducing policies to reduce emissions. Such countries exhibit either flat or declining emissions trajectories throughout the decade. The second subcluster, consisting of countries such as India (\ref{fig:India_emissions}), Iran and Turkey, features sustained growth in emissions during this period. The significantly smaller second cluster consists of countries whose emissions profiles more resemble those of the second subcluster. This cluster consists of China (\ref{fig:China_emissions}), Oman, Vietnam and Qatar.

Finally, we turn to the most recent decade of analysis, 2010-2019. This period produces three characteristic classes of emissions trajectories. The first cluster consists of countries with lower human development index (HDI) levels: Vietnam, Bangladesh, the Philippines, Iraq and Pakistan. These countries produce emissions trajectories that increase significantly throughout the period. The second cluster consists of Venezuela, Ukraine, the United Kingdom (UK), Italy and the United Arab Emirates (UAE). These countries mostly produce declining trajectories, which may signify their greater collective focus on reducing CO$_2$ emissions at the national level. The final cluster consists of two primary subclusters. The first contains countries such as Japan (\ref{fig:Japan_emissions}), the Netherlands and Belgium. These countries are primarily characterised by erratic emissions output, with a declining trend overall. The second subcluster consists of countries such as China (\ref{fig:China_emissions}), Thailand and Chile. Although these countries produce a positive trend over the entire decade, the rate of increase has slowed, and in some cases, overall emissions have begun to trend downward.

\begin{figure*}
    \centering
    \begin{subfigure}[b]{0.48\textwidth}
        \includegraphics[width=\textwidth]{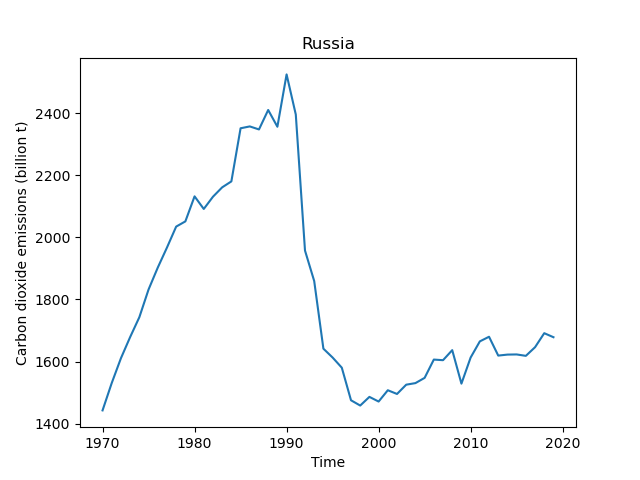}
        \caption{}
    \label{fig:Russia_emissions}
    \end{subfigure}
    \begin{subfigure}[b]{0.48\textwidth}
        \includegraphics[width=\textwidth]{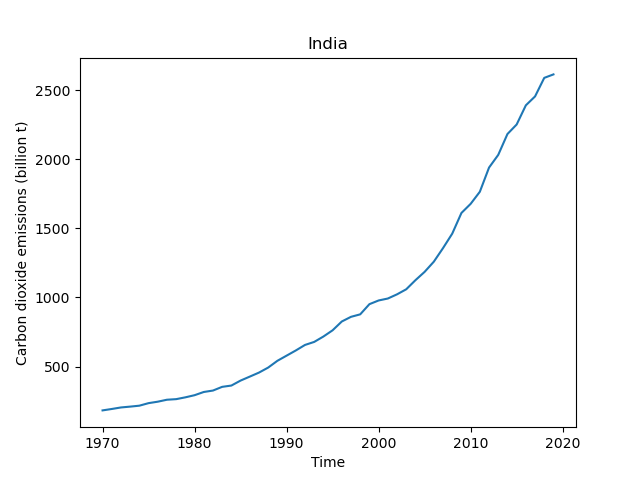}
        \caption{}
    \label{fig:India_emissions}
    \end{subfigure}
    \begin{subfigure}[b]{0.48\textwidth}
        \includegraphics[width=\textwidth]{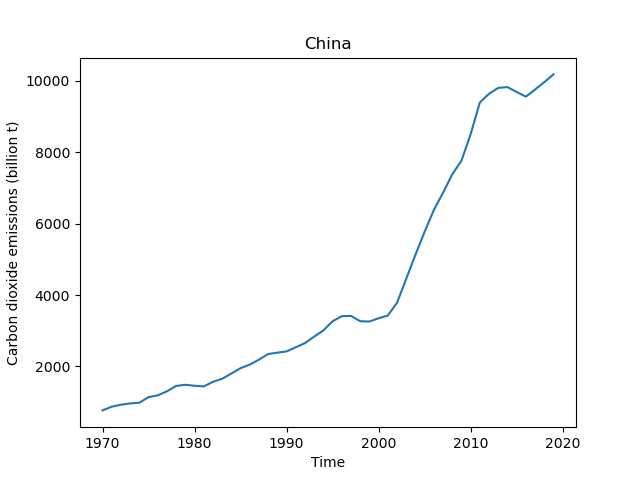}
        \caption{}
    \label{fig:China_emissions}
    \end{subfigure}
        \begin{subfigure}[b]{0.48\textwidth}
        \includegraphics[width=\textwidth]{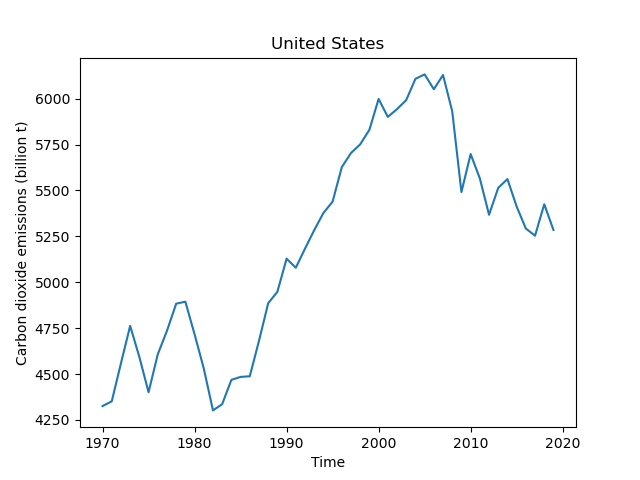}
        \caption{}
    \label{fig:USA_emissions}
    \end{subfigure}
    \begin{subfigure}[b]{0.48\textwidth}
        \includegraphics[width=\textwidth]{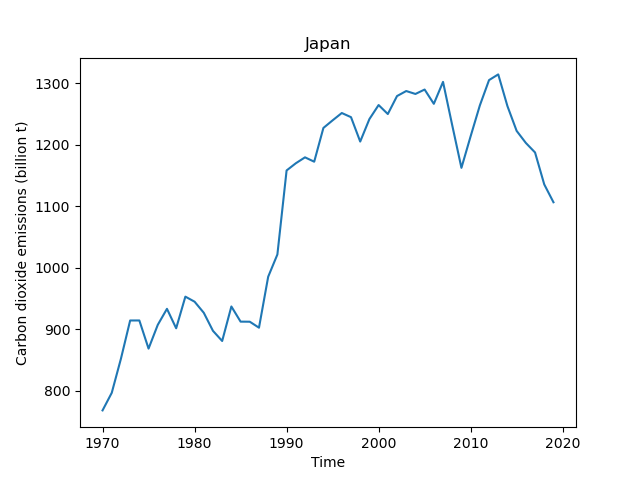}
        \caption{}
    \label{fig:Japan_emissions}
    \end{subfigure}
     \caption{Country emissions (in billions of metric tonnes) over time for (a) Russia (b) India (c) China (d) the United States (e) Japan.}
    \label{fig:Country_emissions}
\end{figure*}


\section{Evolutionary spatial emissions variance}
\label{sec:Geodesic_variance}

In this section, we explore the evolution in geographic dispersion among our 50 countries' CO$_2$ emissions throughout our window of analysis. We study the multivariate time series $x_i(t)$ as a whole with a focus on the changing geographic spread of emissions over time. To do so, we convert each year's emissions values into a probability density function and incorporate real-world distances between countries to produce a measure of the physical spread of emissions across the world. Let $f(t) \in \mathbb{R}^n$ be the \emph{probability vector} of CO$_2$ emissions produced by each country annually, divided by the total amount of emissions produced by our 50 countries that year. Specifically, let 
\begin{align}
f_i(t) = \dfrac{x_i(t)}{\sum_{j=1}^{n} x_j(t)}, i={1},...,n. 
\end{align}
This produces a \emph{probability vector} $f(t) \in \mathbb{R}^n$ for each year $t=1,...,T.$ By a probability vector $v \in \mathbb{R}^n$, we mean a vector $v=(v_1,...,v_n)$ with the property that $0 \leq v_i \leq 1 $ for all $i$ and $\sum_i v_i = 1$. A probability vector could also be called a proportion or percentage vector: in this case, each element of $f(t) \in \mathbb{R}^n$ is the proportion (or percentage or fraction) of the world's emissions in that year by each country. Naturally, the proportions of each country must add up to 1 (or 100\%). The term \emph{probability vector} does have a probabilistic interpretation: $f_i(t)$ is the probability that a random ton of emitted CO$_2$ came from a country $i$. However, the main reason we use the term probability vector is that we will proceed to understand each $f(t)$ as a probability distribution (or probability density function). Then, we may borrow from the rich theory of probability distributions and core concepts like the variance of a probability distribution. 

Specifically, we wish to use these probability distributions together with real-world distances to produce a time-varying measure of global dispersion of emissions. For this purpose, we apply the concept of \emph{geodesic variance} introduced in \cite{James2021_geodesicWasserstein} - this determines the total spread of a probability density function across any candidate metric space. Specifically, given a probability distribution $f$ that corresponds to a measure $\mu$ sitting on a metric space $(X,d)$, we let
\begin{align}
\label{eq:geodesicvariance}
\text{Var}(f) &= \int_{X \times X} d(x,y)^2 d\mu(x) d\mu(y) \\ &= \sum_{x, y \in X} d(x,y)^2 f(x)f(y).    
\end{align}
The second equality is valid when the metric space $(X,d)$ is discrete, as is the case in our analysis. Henceforth, we apply this quantity in the instance where $X$ is the set of 50 countries under consideration and $d(i,j)$ is the real-world haversine \cite{haversine} distance between (the centroids of) any two countries indexed by $i$ and $j$. We refer to (\ref{eq:geodesicvariance}) as the \emph{geographic variance} of the probability distribution $f$; it is computed as follows:
\begin{align}
\label{eq:geodesicvariance2}
\text{Var}(f)  = \sum_{i,j=1}^n d(i,j)^2 f_if_j.    
\end{align}
In \ref{app:proposition}, we explain why we refer to this quantity with the word ``variance''; we include a proof that it generalises the classical notion of the variance of a random variable on the real numbers $\mathbb{R}$.

In Figure \ref{fig:Geodesic_variance}, we plot the geographic variance $\text{Var}(f(t))$ of CO$_2$ emission distributions over time. The plot reveals several interesting findings. First, the trajectory features a broadly quadratic shape with a steady increase in emissions' geographic variance until the year 2000, after which there is a steady decrease in variance. This shape is consistent with the theme that until the year 2000, globalisation and accompanying economic growth led to the propagation of CO$_2$ emissions all around the world, increasing geographic dispersion. The fall in this variance since 2000 indicates that the most serious emissions violators are becoming more spatially concentrated. We further investigate this evolution by studying the proportion of emissions by the three largest countries (by population and emissions): China, India and the United States (US). In 1970, the US was a clear leader in emissions, accounting for 32.7\% of the distribution (across our 50 countries). By contrast, India and China only accounted for 7.8\% and 5.8\% of the distribution, respectively. In the year 2000, the US' share of the total emissions distribution fell to 26.4\%, with China's contribution to the distribution increasing to 14.8\% and India's contribution falling to 4.3\%. In the following 19 years, country emissions' behaviours changed considerably. By 2019, China, the US and India account for 30.7\%, 15.9\% and 7.9\% of the top 50 countries' total emissions, respectively. These figures provide some insight into the overall shape displayed in Figure \ref{fig:Geodesic_variance}. In particular, the reduction in geographic variance exhibited since 2000 could be symptomatic of rapid growth in CO$_2$ emissions produced by China, causing a reallocation of the country's contribution to the overall emissions total, consistent with a larger proportion of emissions coming from a more spatially concentrated area.

\begin{figure*}
    \centering
    \includegraphics[width=0.95\textwidth]{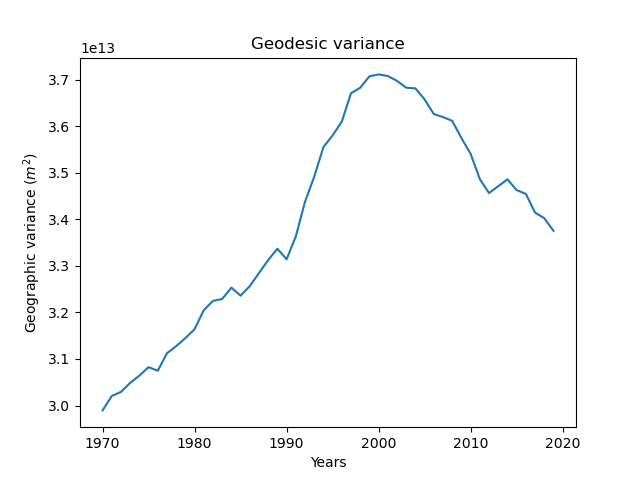}
    \caption{Geographic variance (in $m^2$) in carbon dioxide emissions between 1970-2019. The trajectory displays a noteworthy quadratic shape. Geographic variance in CO$_2$ emissions increase until 2000, beyond which they steadily decrease. This suggests that since 2000 there is less spatial dispersion in the bulk of CO$_2$ emissions. }
    \label{fig:Geodesic_variance}
\end{figure*}

\section{Relationships between the real and carbon economies}
\label{sec:Real_carbon_economies}

In this section, we investigate countries that share similar economic, demographic and emissions histories. First, we gather GDP, population and CO$_2$ emissions data over the past 50 years and generate distance matrices between all countries for each respective metric. Next, we generate an aggregate distance matrix, appropriately normalising for each quantity therein. We then apply multidimensional scaling (MDS) to the distance matrix, projecting the matrix into a lower-dimensional vector space, and then apply K-means clustering to the resulting projection. The number of clusters $k$ is determined by optimising the silhouette score \cite{Rousseeuw1987}. For the sake of interpretability, we accompany these K-means clustering results with a hierarchical clustering dendrogram.

Specifically, we form our normalised aggregate distance matrix by considering the multivariate time series of gross domestic product (GDP) and population data $y_i(t)$ and $z_i(t)$, respectively, in addition to the aforementioned emissions data $x_i(t)$. We define $n \times n$ distance matrices as follows:
\begin{align}
  \label{eq:Xdistance}  D^X_{ij}=\sum_{t=1}^T |x_i(t) - x_j(t) |; \\
 \label{eq:Ydistance}   D^Y_{ij}=\sum_{t=1}^T |y_i(t) - y_j(t) |;\\
    \label{eq:Zdistance} D^Z_{ij}=\sum_{t=1}^T |z_i(t) - z_j(t) |.
\end{align}
For each matrix $D$, let $\| D \|_\infty$ be the maximal element of $D$. Our normalised aggregated matrix is then formed as follows:
\begin{align}
    \Omega = \frac{1}{\| D^X \|_\infty} D^X + \frac{1}{\| D^Y \|_\infty} D^Y  + \frac{1}{\| D^Z \|_\infty} D^Z.
\end{align}
We then apply multidimensional scaling to this matrix to find point projections in 3D space. This procedure selects $z_1,...,z_n \in \mathbb{R}^3$ to minimise the following discrepancy score:
\begin{align}
    \sum_{i,j=1}^n (\Omega_{ij} - \| z_i - z_j\|)^2.
\end{align}
This is a necessary preprocessing step before K-means clustering can be applied. We choose three-dimensional projections as the matrix $\Omega$ is approximately a distance matrix between three-dimensional vectors. Indeed, suppose for two countries $i,j$ we have uniformly for all time, $x_i(t)>x_j(t), y_i(t)>y_j(t)$ and $z_i(t)>z_j(t)$. This is typically seen when a country indexed $i$ is consistently larger than one indexed $j$. Then $\Omega_{ij}$ simplifies as follows. First, let
\begin{align}
\omega_i=\frac{1}{\| D^X \|_\infty} \sum_{t=1}^T x_i(t) + \frac{1}{\| D^Y \|_\infty} \sum_{t=1}^T y_i(t)  + \frac{1}{\| D^Z \|_\infty} \sum_{t=1}^T z_i(t).
\end{align}
Then, under the assumption that $x_i(t)>x_j(t), y_i(t)>y_j(t)$ and $z_i(t)>z_j(t)$ for the country pair $(i,j)$, we have $\omega_i>\omega_j$, and $\Omega_{ij}$ can be reexpressed simply as $\Omega_{ij}=\omega_i - \omega_j$. Perhaps more symmetrically, if the pair of countries $(i,j)$ satisfies the property that one (either $i$ or $j$) is larger in all three attributes than the other, then $\Omega_{ij}$ can be symmetrically rewritten as $|\omega_i - \omega_j|.$ Thus, for many pairs of countries $i,j$, the distance $\Omega_{ij}$ can be expressed as a $L^1$ distance between three scalars, which approximates a Euclidean distance in $\mathbb{R}^3$.

When we apply this analysis to all 50 countries, the US exhibits highly anomalous behaviour. In fact, we must sequentially remove the US, China, India, Russia and Japan (some of the largest countries with respect to emissions) to identify the general structures without these outlier countries. In the proceeding analysis, we remove these five countries, and study the structural patterns among the remaining countries.

\begin{figure*}
    \centering
    \begin{subfigure}[b]{\textwidth}
        \includegraphics[width=\textwidth]{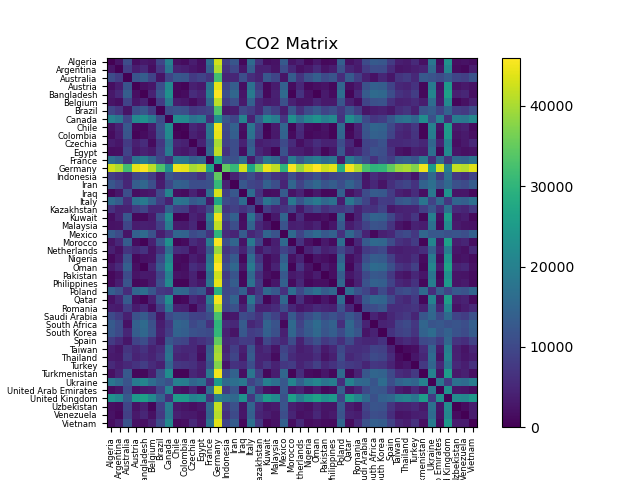}
        \caption{}
    \label{fig:Carbon_dioxide_matrix}
    \end{subfigure}
    \begin{subfigure}[b]{\textwidth}
        \includegraphics[width=\textwidth]{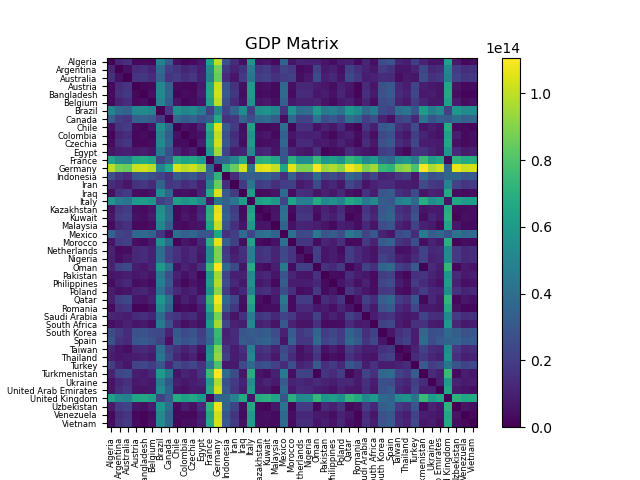}
        \caption{}
    \label{fig:GDP_matrix}
    \end{subfigure}
\end{figure*}
\begin{figure*}\ContinuedFloat
    \begin{subfigure}[b]{\textwidth}
        \includegraphics[width=\textwidth]{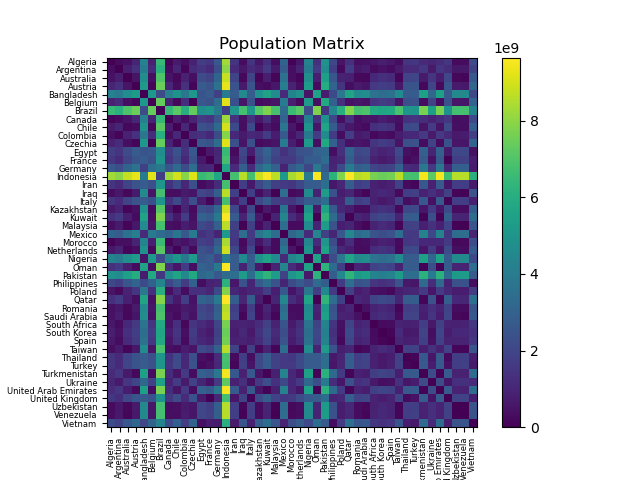}
        \caption{}
    \label{fig:Population_matrix}
    \end{subfigure}
    \caption{Distance matrices between (a) CO$_2$ emissions, defined in (\ref{eq:Xdistance}), (b) GDP, defined in (\ref{eq:Ydistance}), (c) population, defined in (\ref{eq:Zdistance}).}
    \label{fig:Distance_matrices}
\end{figure*}  

First, we explore the similarity between countries with respect to each metric individually. Figure \ref{fig:Distance_matrices} displays the three distance matrices for CO$_2$ emissions, GDP, and population. In Figure \ref{fig:Carbon_dioxide_matrix}, Germany exhibits the greatest dissimilarity to the remainder of the collection of countries. Indeed, as seen in Figure \ref{fig:Germany_optimal_0}, Germany exhibits consistently decreasing emissions over time, unlike almost any other country. In Figure \ref{fig:GDP_matrix}, one can see that Germany, France and the UK are the most anomalous countries with respect to GDP based on their significant economic output over the past 50 years. In Figure \ref{fig:Population_matrix}, we see the distance between country populations, where Indonesia, Brazil and Bangladesh exhibit the greatest dissimilarity to the rest of the collection. 

\begin{figure*}
    \centering
    \includegraphics[width=\textwidth]{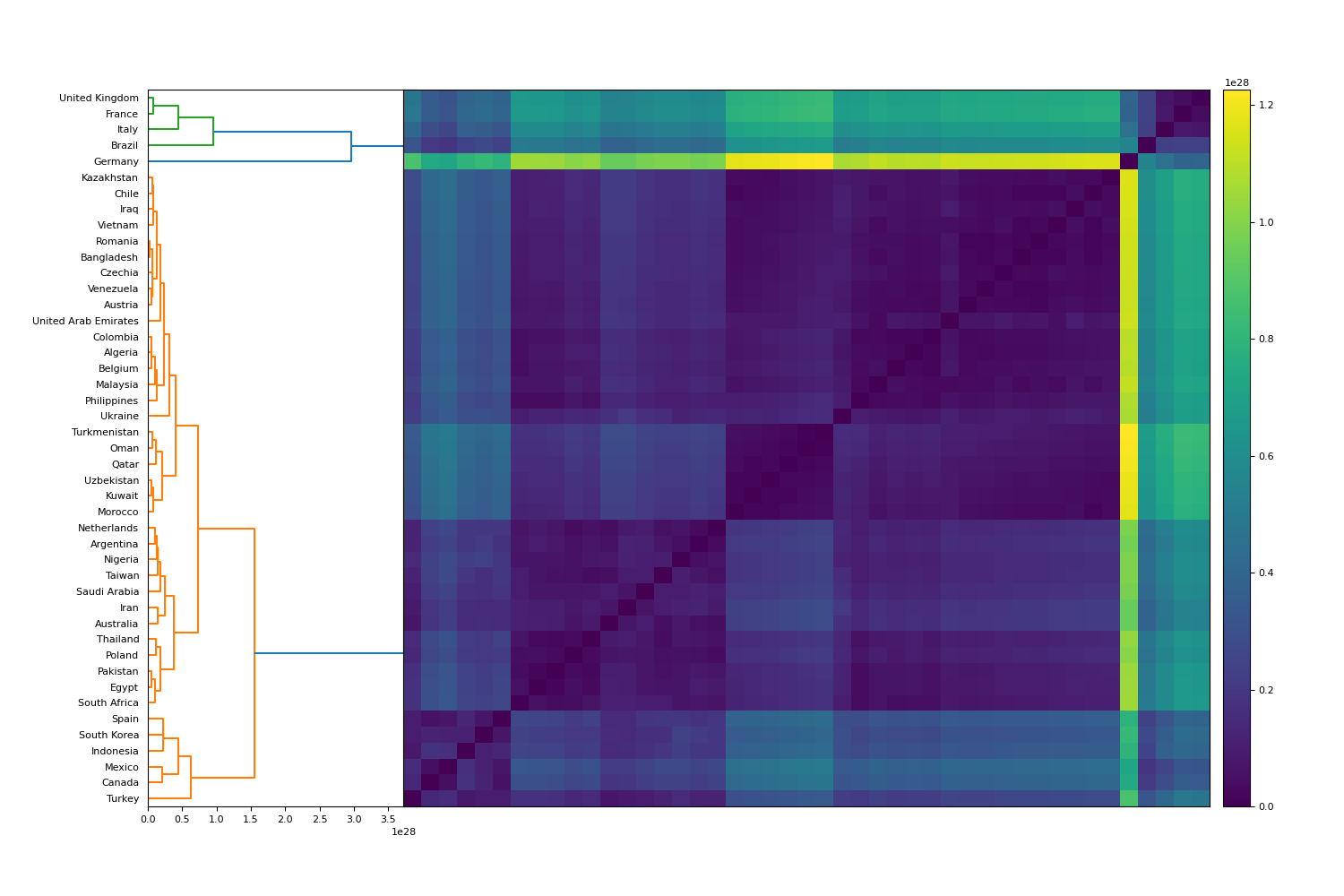}
    \caption{Hierarchical clustering applied to our aggregate normalised distance matrix $\Omega$, which captures similarity in historical trends between countries' real and carbon economies.}
    \label{fig:Clustering_real_carbon_economy}
\end{figure*}

Next, we turn to the aggregated distance matrix, which intends to uncover countries that are similar both in terms of their GDP and CO$_2$ emissions (real and carbon economies, respectively). Multidimensional scaling and the optimal silhouette score identify $k=2$ clusters, revealing two primary classes of countries with respect to their real and carbon economies. Notably, most countries are classified in one cluster, with only a small collection determined to belong to the second cluster. The countries in the minority cluster are Brazil, Canada, France, Germany, Italy, Mexico and the UK. This suggests that the latent factor causing separation in the lower dimensional vector space is related to GDP, rather than population. For instance, Pakistan, which has a population of $\sim$ 220 million people, is determined to exist in the former cluster.

We also present the hierarchical clustering results of our aggregate distance matrix across all three attributes. Figure \ref{fig:Clustering_real_carbon_economy} highlights the existence of one prominent cluster, a small subcluster and an outlier. The primary cluster consists of three subclusters. The first contains countries such as Chile, Iraq, Vietnam, Romania, Bangladesh, and others. Most countries in this cluster are characterised by developing economies, growth in population, and significant growth in CO$_2$ emissions over time. The second subcluster consists of countries such as the Netherlands, Argentina, Nigeria, and Taiwan. These countries mostly display consistent growth in GDP and population and lower levels of CO$_2$ emissions over time. The final subcluster consists of Spain, South Korea, Indonesia, Mexico, Canada and Turkey. Most of these countries displayed moderate growth in GDP and population, and growth in CO$_2$ emissions, exhibiting reasonable variability over time. In particular, many of these countries experience a flattening in emissions over the final 5-10 years of the analysis window. The second cluster includes the UK, France, Italy and Brazil. These countries are characterised by increasing emissions trajectories until the later parts of the analysis window, where a flattening or decline in emissions occurs. Germany is identified as an outlier, which is due to its steady decline in emissions over time and the strong GDP growth over the entire analysis window. It provides a model example of successful economic growth while reducing its environmental impact.

\section{Discussion}
\label{sec:Discussion}

\subsection{Summary of findings}
The central focus of this paper is the evolutionary study of spatio-temporal trends in global and country-specific CO$_2$ emission behaviours over time. In our first section, we demonstrate that most countries are well modelled by a (not necessarily continuous) piecewise linear data generating process. We believe this is the first paper to use such a methodology to study carbon emissions on a country-by-country basis, as well as the first to reveal the extent to which so many countries are well described by this piecewise linear model. Such a finding is not only surprising, it could provoke new research analysing the explanations of such breaks in the rate of growth (or decline) of country emissions. In addition, it would be interesting to monitor moving forward, given the potential for exogenous political and social factors to alter country behaviours. The Russia-Ukraine war of 2022 may cause substantial changes in several countries' emission trends, for example.


Next, we demonstrate pronounced heterogeneity in cluster number, cluster size and constituency among emissions trajectories on a decade-by-decade basis. This section highlights the dynamic nature of this problem and the need for constant monitoring of country behaviours. Then, the evolutionary study of geographic variance reveals a peak in emissions' spatial dispersion in the year 2000, beyond which emissions have become more spatially concentrated. We highlight that this is largely due to China's accounting for a disproportionate level of the global emissions distribution in recent years. We believe this is another original finding revealed by a new methodology and approach. Finally, we introduce a framework to group countries based on their real and carbon economies. Our methodology captures emissions data, GDP and population over the past 50 years, and incorporates dimensionality reduction and clustering. This framework could be applied to other problems, or one could generalise some of the questions explored in this paper using different economic and demographic metrics.

\subsection{Limitations and future work}
There are a variety of avenues researchers could explore when considering avenues for future research. We have applied the concept of geodesic variance to country emissions trajectories. One could extend this analysis to study the geographic spread in emissions among specific industries and identify which sectors of the economy are the most serious violators. Building upon this analysis, one could then study the ``elasticity'' of such emissions and build models to identify which sectors of the economy have emissions that are most easily reduced.

More broadly, given the timeliness and importance of global decarbonisation, a richer dataset with more covariates (such as industry of emissions) could allow policymakers and scientists to build more detailed models for various problems pertaining to prediction and statistical inference. Furthermore, this paper does not consider exogenous factors that may influence emissions; one could explicitly account for societal and political influences in overall emissions behaviours via random or fixed-effects models. Related to this idea, one could build a ranking system of emissions from those countries or industries that provide the most societal or economic benefit per unit omitted. One could conceivably then rank such emissions and determine which are most easily reduced and will have the smallest impact on the economy.

\section*{Competing interests statement}
The authors have no competing interests.

\section*{Data accessibility statement}
All data analysed in this manuscript are available from \cite{CO2data}.

\appendix

\section{Proposition on geographic variance}
\label{app:proposition}

\begin{prop}
Let $f$ be a probability density function on a finite metric space $(X,d)$, and consider its geodesic variance, defined by (\ref{eq:geodesicvariance}). That is,
\begin{align}
\label{eq:geodesicvariance_new}
\text{Var}(f) &= \int_{X \times X} d(x,y)^2 d\mu(x) d\mu(y)\\ &= \sum_{x, y \in X} d(x,y)^2 f(x)f(y).  
\end{align}
When $(X,d)$ is a finite subset of the real number line $\mathbb{R}$ with the Euclidean metric, this quantity reduces to the classical definition of variance, up to a factor of 2.
\end{prop}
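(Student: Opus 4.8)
The plan is to compute both sides directly and compare. Let $X = \{a_1, \dots, a_n\} \subset \mathbb{R}$ be a finite set of real numbers, and let $f$ be a probability density on $X$, so that $p_k := f(a_k) \geq 0$ and $\sum_k p_k = 1$. We may regard $f$ as the distribution of a random variable $Z$ taking value $a_k$ with probability $p_k$. The classical variance is $\operatorname{Var}_{\text{cl}}(Z) = \E{Z^2} - (\E{Z})^2 = \sum_k p_k a_k^2 - \left(\sum_k p_k a_k\right)^2$. On the other side, since $d(a_i, a_j) = |a_i - a_j|$ and $d(a_i,a_j)^2 = (a_i - a_j)^2$, the geodesic variance is
\begin{align}
\operatorname{Var}(f) = \sum_{i,j=1}^n (a_i - a_j)^2 p_i p_j.
\end{align}

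First I would expand the square: $(a_i - a_j)^2 = a_i^2 - 2 a_i a_j + a_j^2$. Then I would distribute the double sum into three terms, using $\sum_i p_i = \sum_j p_j = 1$ to collapse the marginal sums. The first term gives $\sum_{i,j} a_i^2 p_i p_j = \sum_i a_i^2 p_i$, the third term gives $\sum_j a_j^2 p_j$ identically, and the cross term gives $-2 \sum_{i,j} a_i a_j p_i p_j = -2 \left(\sum_i a_i p_i\right)\left(\sum_j a_j p_j\right)$. Collecting, $\operatorname{Var}(f) = 2\left[\sum_k a_k^2 p_k - \left(\sum_k a_k p_k\right)^2\right] = 2\operatorname{Var}_{\text{cl}}(Z)$, which is the claimed identity with the factor of $2$.

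There is essentially no obstacle here — the argument is a one-line expansion of a quadratic followed by bookkeeping with the normalization $\sum p_k = 1$. If anything, the only point worth stating carefully is that the continuous (integral) and discrete (sum) forms of $\operatorname{Var}(f)$ in (\ref{eq:geodesicvariance_new}) coincide when the measure $\mu$ is supported on a finite set, which is immediate since $\mu = \sum_k p_k \delta_{a_k}$ and the product measure $\mu \otimes \mu$ assigns mass $p_i p_j$ to $(a_i, a_j)$. I would also remark in passing that the factor of $2$ is unavoidable and arises because the double integral counts each unordered pair twice and also includes the diagonal (which contributes zero), so no normalization convention recovers exactly the classical variance; the word ``variance'' is justified up to this harmless constant.
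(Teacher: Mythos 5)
Your proposal is correct and follows essentially the same route as the paper's proof: expand $(x-y)^2$, split the double sum into three terms, use the normalisation $\sum_k p_k = 1$ and symmetry to collapse the marginals, and recover $2\operatorname{var}$. The additional remarks on the discrete/integral coincidence and the origin of the factor of $2$ are accurate but do not change the argument.
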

\begin{proof}
Let $Y$ be a random variable on the real number line $\mathbb{R}$. The classical definition of variance is the quantity $\text{var}(Y)=\E Y^2 - (\E Y)^2$. If $Y$ has a density $f$ over a (finite) set $X$, this can be expressed as
\begin{align}
    \text{var}(Y) = \sum_{x \in X} x^2 f(x) - \left(\sum_{x \in X} x f(x)\right)^2.
\end{align}
On the other hand, (\ref{eq:geodesicvariance_new}) can be expanded as follows:
\begin{align}
    \text{Var}(f) &= \sum_{x, y \in X} (x-y)^2 f(x)f(y)\\
   &=\sum_{x, y \in X} x^2 f(x)f(y) + \sum_{x, y \in X} y^2 f(x)f(y) - 2\sum_{x, y \in X} xyf(x)f(y)\\
     &= 2\sum_{x, y \in X} x^2 f(x)f(y) - 2\sum_{x, y \in X} xyf(x)f(y)\\
    &=2\sum_{x \in X} x^2 f(x) - 2 \left(\sum_{x \in X} x f(x)\right)\left(\sum_{y \in X} y f(y)\right)\\
    &=2\text{var}(Y).
\end{align}
Thus, up to a factor of 2, the geodesic (or geographic) variance reduces to the classical notion of variance on the real line. 
\end{proof}

With more care, the above proposition holds if $X$ is an arbitrary metric space, with $\mu, \nu$ appropriately integrable measures on $X$.

\bibliographystyle{_elsarticle-num-names}
\bibliography{__newrefs.bib}

\begin{thebibliography}{66}
\expandafter\ifx\csname natexlab\endcsname\relax\def\natexlab#1{#1}\fi
\providecommand{\url}[1]{\texttt{#1}}
\providecommand{\href}[2]{#2}
\providecommand{\path}[1]{#1}
\providecommand{\DOIprefix}{doi:}
\providecommand{\ArXivprefix}{arXiv:}
\providecommand{\URLprefix}{URL: }
\providecommand{\Pubmedprefix}{pmid:}
\providecommand{\doi}[1]{\href{http://dx.doi.org/#1}{\path{#1}}}
\providecommand{\Pubmed}[1]{\href{pmid:#1}{\path{#1}}}
\providecommand{\bibinfo}[2]{#2}
\ifx\xfnm\relax \def\xfnm[#1]{\unskip,\space#1}\fi
\bibitem[{WHO(2022)}]{WHO_CO2}
\bibinfo{title}{Climate change},
  \bibinfo{howpublished}{\url{https://www.who.int/heli/risks/climate/climatechange/en/}},
  \bibinfo{year}{2022}. \bibinfo{note}{{World Health Organization}}.
\bibitem[{WHO(2021)}]{WHO_CO3}
\bibinfo{title}{Climate change and health},
  \bibinfo{howpublished}{\url{https://www.who.int/news-room/fact-sheets/detail/climate-change-and-health}},
  \bibinfo{year}{2021}. \bibinfo{note}{{World Health Organization}, 30 October
  2021}.
\bibitem[{Ribeiro et~al.(2019)Ribeiro, Rybski, and Kropp}]{Ribeiro2019}
\bibinfo{author}{H.~V. Ribeiro}, \bibinfo{author}{D.~Rybski},
  \bibinfo{author}{J.~P. Kropp},
\newblock \bibinfo{title}{Effects of changing population or density on urban
  carbon dioxide emissions},
\newblock \bibinfo{journal}{Nature Communications} \bibinfo{volume}{10}
  (\bibinfo{year}{2019}). \DOIprefix\doi{10.1038/s41467-019-11184-y}.
\bibitem[{Adams and Nsiah(2019)}]{Adams2019}
\bibinfo{author}{S.~Adams}, \bibinfo{author}{C.~Nsiah},
\newblock \bibinfo{title}{Reducing carbon dioxide emissions: Does renewable
  energy matter?},
\newblock \bibinfo{journal}{Science of The Total Environment}
  \bibinfo{volume}{693} (\bibinfo{year}{2019}) \bibinfo{pages}{133288}.
  \DOIprefix\doi{10.1016/j.scitotenv.2019.07.094}.
\bibitem[{Akadiri et~al.(2020)Akadiri, Alola, Olasehinde-Williams, and
  Etokakpan}]{SaintAkadiri2020}
\bibinfo{author}{S.~S. Akadiri}, \bibinfo{author}{A.~A. Alola},
  \bibinfo{author}{G.~Olasehinde-Williams}, \bibinfo{author}{M.~U. Etokakpan},
\newblock \bibinfo{title}{The role of electricity consumption, globalization
  and economic growth in carbon dioxide emissions and its implications for
  environmental sustainability targets},
\newblock \bibinfo{journal}{Science of The Total Environment}
  \bibinfo{volume}{708} (\bibinfo{year}{2020}) \bibinfo{pages}{134653}.
  \DOIprefix\doi{10.1016/j.scitotenv.2019.134653}.
\bibitem[{Shindell et~al.(2018)Shindell, Faluvegi, Seltzer, and
  Shindell}]{Shindell2018}
\bibinfo{author}{D.~Shindell}, \bibinfo{author}{G.~Faluvegi},
  \bibinfo{author}{K.~Seltzer}, \bibinfo{author}{C.~Shindell},
\newblock \bibinfo{title}{Quantified, localized health benefits of accelerated
  carbon dioxide emissions reductions},
\newblock \bibinfo{journal}{Nature Climate Change} \bibinfo{volume}{8}
  (\bibinfo{year}{2018}) \bibinfo{pages}{291--295}.
  \DOIprefix\doi{10.1038/s41558-018-0108-y}.
\bibitem[{Zhou et~al.(2019)Zhou, Wang, and Wang}]{Zhou2019}
\bibinfo{author}{C.~Zhou}, \bibinfo{author}{S.~Wang},
  \bibinfo{author}{J.~Wang},
\newblock \bibinfo{title}{Examining the influences of urbanization on carbon
  dioxide emissions in the yangtze river delta, china: Kuznets curve
  relationship},
\newblock \bibinfo{journal}{Science of The Total Environment}
  \bibinfo{volume}{675} (\bibinfo{year}{2019}) \bibinfo{pages}{472--482}.
  \DOIprefix\doi{10.1016/j.scitotenv.2019.04.269}.
\bibitem[{Dowell et~al.(2020)Dowell, Niederdeppe, Vanucchi, Dogan, Donaghy,
  Jacobson, Mahowald, Milstein, and Zelikova}]{Dowell2020}
\bibinfo{author}{G.~Dowell}, \bibinfo{author}{J.~Niederdeppe},
  \bibinfo{author}{J.~Vanucchi}, \bibinfo{author}{T.~Dogan},
  \bibinfo{author}{K.~Donaghy}, \bibinfo{author}{R.~Jacobson},
  \bibinfo{author}{N.~Mahowald}, \bibinfo{author}{M.~Milstein},
  \bibinfo{author}{T.~J. Zelikova},
\newblock \bibinfo{title}{Rooting carbon dioxide removal research in the social
  sciences},
\newblock \bibinfo{journal}{Interface Focus} \bibinfo{volume}{10}
  (\bibinfo{year}{2020}) \bibinfo{pages}{20190138}.
  \DOIprefix\doi{10.1098/rsfs.2019.0138}.
\bibitem[{Nechita-Banda et~al.(2018)Nechita-Banda, Krol, van~der Werf, Kaiser,
  Pandey, Huijnen, Clerbaux, Coheur, Deeter, and
  R\"{o}ckmann}]{NechitaBanda2018}
\bibinfo{author}{N.~Nechita-Banda}, \bibinfo{author}{M.~Krol},
  \bibinfo{author}{G.~R. van~der Werf}, \bibinfo{author}{J.~W. Kaiser},
  \bibinfo{author}{S.~Pandey}, \bibinfo{author}{V.~Huijnen},
  \bibinfo{author}{C.~Clerbaux}, \bibinfo{author}{P.~Coheur},
  \bibinfo{author}{M.~N. Deeter}, \bibinfo{author}{T.~R\"{o}ckmann},
\newblock \bibinfo{title}{Monitoring emissions from the 2015 indonesian fires
  using {CO} satellite data},
\newblock \bibinfo{journal}{Philosophical Transactions of the Royal Society B:
  Biological Sciences} \bibinfo{volume}{373} (\bibinfo{year}{2018})
  \bibinfo{pages}{20170307}. \DOIprefix\doi{10.1098/rstb.2017.0307}.
\bibitem[{Serrenho et~al.(2017)Serrenho, Norman, and Allwood}]{Serrenho2017}
\bibinfo{author}{A.~C. Serrenho}, \bibinfo{author}{J.~B. Norman},
  \bibinfo{author}{J.~M. Allwood},
\newblock \bibinfo{title}{The impact of reducing car weight on global
  emissions: the future fleet in great britain},
\newblock \bibinfo{journal}{Philosophical Transactions of the Royal Society A:
  Mathematical, Physical and Engineering Sciences} \bibinfo{volume}{375}
  (\bibinfo{year}{2017}) \bibinfo{pages}{20160364}.
  \DOIprefix\doi{10.1098/rsta.2016.0364}.
\bibitem[{Bowerman et~al.(2011)Bowerman, Frame, Huntingford, Lowe, and
  Allen}]{Bowerman2011}
\bibinfo{author}{N.~H.~A. Bowerman}, \bibinfo{author}{D.~J. Frame},
  \bibinfo{author}{C.~Huntingford}, \bibinfo{author}{J.~A. Lowe},
  \bibinfo{author}{M.~R. Allen},
\newblock \bibinfo{title}{Cumulative carbon emissions, emissions floors and
  short-term rates of warming: implications for policy},
\newblock \bibinfo{journal}{Philosophical Transactions of the Royal Society A:
  Mathematical, Physical and Engineering Sciences} \bibinfo{volume}{369}
  (\bibinfo{year}{2011}) \bibinfo{pages}{45--66}.
  \DOIprefix\doi{10.1098/rsta.2010.0288}.
\bibitem[{Wang and Feng(2019)}]{Wang2019}
\bibinfo{author}{M.~Wang}, \bibinfo{author}{C.~Feng},
\newblock \bibinfo{title}{Decoupling economic growth from carbon dioxide
  emissions in china{\textquotesingle}s metal industrial sectors: A
  technological and efficiency perspective},
\newblock \bibinfo{journal}{Science of The Total Environment}
  \bibinfo{volume}{691} (\bibinfo{year}{2019}) \bibinfo{pages}{1173--1181}.
  \DOIprefix\doi{10.1016/j.scitotenv.2019.07.190}.
\bibitem[{Naz et~al.(2018)Naz, Sultan, Zaman, Aldakhil, Nassani, and
  Abro}]{Naz2018}
\bibinfo{author}{S.~Naz}, \bibinfo{author}{R.~Sultan},
  \bibinfo{author}{K.~Zaman}, \bibinfo{author}{A.~M. Aldakhil},
  \bibinfo{author}{A.~A. Nassani}, \bibinfo{author}{M.~M.~Q. Abro},
\newblock \bibinfo{title}{Moderating and mediating role of renewable energy
  consumption, {FDI} inflows, and economic growth on carbon dioxide emissions:
  evidence from robust least square estimator},
\newblock \bibinfo{journal}{Environmental Science and Pollution Research}
  \bibinfo{volume}{26} (\bibinfo{year}{2018}) \bibinfo{pages}{2806--2819}.
  \DOIprefix\doi{10.1007/s11356-018-3837-6}.
\bibitem[{Tong et~al.(2020)Tong, Ortiz, Xu, and Li}]{Tong2020}
\bibinfo{author}{T.~Tong}, \bibinfo{author}{J.~Ortiz}, \bibinfo{author}{C.~Xu},
  \bibinfo{author}{F.~Li},
\newblock \bibinfo{title}{Economic growth, energy consumption, and carbon
  dioxide emissions in the e7 countries: a bootstrap {ARDL} bound test},
\newblock \bibinfo{journal}{Energy, Sustainability and Society}
  \bibinfo{volume}{10} (\bibinfo{year}{2020}).
  \DOIprefix\doi{10.1186/s13705-020-00253-6}.
\bibitem[{Bekun et~al.(2019)Bekun, Emir, and Sarkodie}]{Bekun2019}
\bibinfo{author}{F.~V. Bekun}, \bibinfo{author}{F.~Emir},
  \bibinfo{author}{S.~A. Sarkodie},
\newblock \bibinfo{title}{Another look at the relationship between energy
  consumption, carbon dioxide emissions, and economic growth in south africa},
\newblock \bibinfo{journal}{Science of The Total Environment}
  \bibinfo{volume}{655} (\bibinfo{year}{2019}) \bibinfo{pages}{759--765}.
  \DOIprefix\doi{10.1016/j.scitotenv.2018.11.271}.
\bibitem[{Liu et~al.(2022)Liu, Jiang, Wang, and Zhang}]{Liu2022}
\bibinfo{author}{Z.~Liu}, \bibinfo{author}{P.~Jiang},
  \bibinfo{author}{J.~Wang}, \bibinfo{author}{L.~Zhang},
\newblock \bibinfo{title}{Ensemble system for short term carbon dioxide
  emissions forecasting based on multi-objective tangent search algorithm},
\newblock \bibinfo{journal}{Journal of Environmental Management}
  \bibinfo{volume}{302} (\bibinfo{year}{2022}) \bibinfo{pages}{113951}.
  \DOIprefix\doi{10.1016/j.jenvman.2021.113951}.
\bibitem[{Mason et~al.(2018)Mason, Duggan, and Howley}]{Mason2018}
\bibinfo{author}{K.~Mason}, \bibinfo{author}{J.~Duggan},
  \bibinfo{author}{E.~Howley},
\newblock \bibinfo{title}{Forecasting energy demand, wind generation and carbon
  dioxide emissions in ireland using evolutionary neural networks},
\newblock \bibinfo{journal}{Energy} \bibinfo{volume}{155}
  (\bibinfo{year}{2018}) \bibinfo{pages}{705--720}.
  \DOIprefix\doi{10.1016/j.energy.2018.04.192}.
\bibitem[{Sutthichaimethee and Ariyasajjakorn(2018)}]{Sutthichaimethee2018}
\bibinfo{author}{P.~Sutthichaimethee}, \bibinfo{author}{D.~Ariyasajjakorn},
\newblock \bibinfo{title}{Forecast of carbon dioxide emissions from energy
  consumption in industry sectors in thailand},
\newblock \bibinfo{journal}{Environmental and Climate Technologies}
  \bibinfo{volume}{22} (\bibinfo{year}{2018}) \bibinfo{pages}{107--117}.
  \DOIprefix\doi{10.2478/rtuect-2018-0007}.
\bibitem[{Chiu et~al.(2020)Chiu, Hu, Jiang, Xie, and Ken}]{Chiu2020}
\bibinfo{author}{Y.-J. Chiu}, \bibinfo{author}{Y.-C. Hu},
  \bibinfo{author}{P.~Jiang}, \bibinfo{author}{J.~Xie}, \bibinfo{author}{Y.-W.
  Ken},
\newblock \bibinfo{title}{A multivariate grey prediction model using neural
  networks with application to carbon dioxide emissions forecasting},
\newblock \bibinfo{journal}{Mathematical Problems in Engineering}
  \bibinfo{volume}{2020} (\bibinfo{year}{2020}) \bibinfo{pages}{1--10}.
  \DOIprefix\doi{10.1155/2020/8829948}.
\bibitem[{T\"{o}belmann and Wendler(2020)}]{Tbelmann2020}
\bibinfo{author}{D.~T\"{o}belmann}, \bibinfo{author}{T.~Wendler},
\newblock \bibinfo{title}{The impact of environmental innovation on carbon
  dioxide emissions},
\newblock \bibinfo{journal}{Journal of Cleaner Production}
  \bibinfo{volume}{244} (\bibinfo{year}{2020}) \bibinfo{pages}{118787}.
  \DOIprefix\doi{10.1016/j.jclepro.2019.118787}.
\bibitem[{Peters et~al.(2019)Peters, Andrew, Canadell, Friedlingstein, Jackson,
  Korsbakken, Qu{\'{e}}r{\'{e}}, and Peregon}]{Peters2019}
\bibinfo{author}{G.~P. Peters}, \bibinfo{author}{R.~M. Andrew},
  \bibinfo{author}{J.~G. Canadell}, \bibinfo{author}{P.~Friedlingstein},
  \bibinfo{author}{R.~B. Jackson}, \bibinfo{author}{J.~I. Korsbakken},
  \bibinfo{author}{C.~L. Qu{\'{e}}r{\'{e}}}, \bibinfo{author}{A.~Peregon},
\newblock \bibinfo{title}{Carbon dioxide emissions continue to grow amidst
  slowly emerging climate policies},
\newblock \bibinfo{journal}{Nature Climate Change} \bibinfo{volume}{10}
  (\bibinfo{year}{2019}) \bibinfo{pages}{3--6}.
  \DOIprefix\doi{10.1038/s41558-019-0659-6}.
\bibitem[{Holloway(2007)}]{Holloway2007}
\bibinfo{author}{S.~Holloway},
\newblock \bibinfo{title}{Carbon dioxide capture and geological storage},
\newblock \bibinfo{journal}{Philosophical Transactions of the Royal Society A:
  Mathematical, Physical and Engineering Sciences} \bibinfo{volume}{365}
  (\bibinfo{year}{2007}) \bibinfo{pages}{1095--1107}.
  \DOIprefix\doi{10.1098/rsta.2006.1953}.
\bibitem[{Zelikova(2020)}]{Zelikova2020}
\bibinfo{author}{T.~J. Zelikova},
\newblock \bibinfo{title}{The future of carbon dioxide removal must be
  transdisciplinary},
\newblock \bibinfo{journal}{Interface Focus} \bibinfo{volume}{10}
  (\bibinfo{year}{2020}) \bibinfo{pages}{20200038}.
  \DOIprefix\doi{10.1098/rsfs.2020.0038}.
\bibitem[{Gao and Ge(2020)}]{Gao2020}
\bibinfo{author}{C.~Gao}, \bibinfo{author}{H.~Ge},
\newblock \bibinfo{title}{Spatiotemporal characteristics of china's carbon
  emissions and driving forces: A five-year plan perspective from 2001 to
  2015},
\newblock \bibinfo{journal}{Journal of Cleaner Production}
  \bibinfo{volume}{248} (\bibinfo{year}{2020}) \bibinfo{pages}{119280}.
  \DOIprefix\doi{10.1016/j.jclepro.2019.119280}.
\bibitem[{Hu et~al.(2020)Hu, Li, You, Liu, and Lee}]{Hu2020}
\bibinfo{author}{M.~Hu}, \bibinfo{author}{R.~Li}, \bibinfo{author}{W.~You},
  \bibinfo{author}{Y.~Liu}, \bibinfo{author}{C.-C. Lee},
\newblock \bibinfo{title}{Spatiotemporal evolution of decoupling and driving
  forces of {CO}2 emissions on economic growth along the belt and road},
\newblock \bibinfo{journal}{Journal of Cleaner Production}
  \bibinfo{volume}{277} (\bibinfo{year}{2020}) \bibinfo{pages}{123272}.
  \DOIprefix\doi{10.1016/j.jclepro.2020.123272}.
\bibitem[{Roobaert et~al.(2019)Roobaert, Laruelle, Landsch\"{u}tzer, Gruber,
  Chou, and Regnier}]{Roobaert2019}
\bibinfo{author}{A.~Roobaert}, \bibinfo{author}{G.~G. Laruelle},
  \bibinfo{author}{P.~Landsch\"{u}tzer}, \bibinfo{author}{N.~Gruber},
  \bibinfo{author}{L.~Chou}, \bibinfo{author}{P.~Regnier},
\newblock \bibinfo{title}{The spatiotemporal dynamics of the sources and sinks
  of {CO} 2 in the global coastal ocean},
\newblock \bibinfo{journal}{Global Biogeochemical Cycles} \bibinfo{volume}{33}
  (\bibinfo{year}{2019}) \bibinfo{pages}{1693--1714}.
  \DOIprefix\doi{10.1029/2019gb006239}.
\bibitem[{James et~al.(2022)James, Menzies, and Bondell}]{james2021_CovidIndia}
\bibinfo{author}{N.~James}, \bibinfo{author}{M.~Menzies},
  \bibinfo{author}{H.~Bondell},
\newblock \bibinfo{title}{Comparing the dynamics of {COVID}-19 infection and
  mortality in the {U}nited {S}tates, {I}ndia, and {B}razil},
\newblock \bibinfo{journal}{Physica D: Nonlinear Phenomena}
  \bibinfo{volume}{432} (\bibinfo{year}{2022}) \bibinfo{pages}{133158}.
  \DOIprefix\doi{10.1016/j.physd.2022.133158}.
\bibitem[{Chowell et~al.(2016)Chowell, Sattenspiel, Bansal, and
  Viboud}]{Chowell2016}
\bibinfo{author}{G.~Chowell}, \bibinfo{author}{L.~Sattenspiel},
  \bibinfo{author}{S.~Bansal}, \bibinfo{author}{C.~Viboud},
\newblock \bibinfo{title}{Mathematical models to characterize early epidemic
  growth: A review},
\newblock \bibinfo{journal}{Physics of Life Reviews} \bibinfo{volume}{18}
  (\bibinfo{year}{2016}) \bibinfo{pages}{66--97}.
  \DOIprefix\doi{10.1016/j.plrev.2016.07.005}.
\bibitem[{Manchein et~al.(2020)Manchein, Brugnago, da~Silva, Mendes, and
  Beims}]{Manchein2020}
\bibinfo{author}{C.~Manchein}, \bibinfo{author}{E.~L. Brugnago},
  \bibinfo{author}{R.~M. da~Silva}, \bibinfo{author}{C.~F.~O. Mendes},
  \bibinfo{author}{M.~W. Beims},
\newblock \bibinfo{title}{Strong correlations between power-law growth of
  {COVID}-19 in four continents and the inefficiency of soft quarantine
  strategies},
\newblock \bibinfo{journal}{Chaos: An Interdisciplinary Journal of Nonlinear
  Science} \bibinfo{volume}{30} (\bibinfo{year}{2020}) \bibinfo{pages}{041102}.
  \DOIprefix\doi{10.1063/5.0009454}.
\bibitem[{James and Menzies(2022)}]{james2021_TVO}
\bibinfo{author}{N.~James}, \bibinfo{author}{M.~Menzies},
\newblock \bibinfo{title}{Estimating a continuously varying offset between
  multivariate time series with application to {COVID}-19 in the {U}nited
  {S}tates},
\newblock \bibinfo{journal}{The European Physical Journal Special Topics}
  (\bibinfo{year}{2022}). \DOIprefix\doi{10.1140/epjs/s11734-022-00430-y}.
\bibitem[{Blasius(2020)}]{Blasius2020}
\bibinfo{author}{B.~Blasius},
\newblock \bibinfo{title}{Power-law distribution in the number of confirmed
  {COVID}-19 cases},
\newblock \bibinfo{journal}{Chaos: An Interdisciplinary Journal of Nonlinear
  Science} \bibinfo{volume}{30} (\bibinfo{year}{2020}) \bibinfo{pages}{093123}.
  \DOIprefix\doi{10.1063/5.0013031}.
\bibitem[{James and Menzies(2021)}]{James2021_virulence}
\bibinfo{author}{N.~James}, \bibinfo{author}{M.~Menzies},
\newblock \bibinfo{title}{Trends in {COVID}-19 prevalence and mortality: A year
  in review},
\newblock \bibinfo{journal}{Physica D: Nonlinear Phenomena}
  \bibinfo{volume}{425} (\bibinfo{year}{2021}) \bibinfo{pages}{132968}.
  \DOIprefix\doi{10.1016/j.physd.2021.132968}.
\bibitem[{Prakash et~al.(2021)Prakash, James, Menzies, and Francis}]{arjun}
\bibinfo{author}{A.~Prakash}, \bibinfo{author}{N.~James},
  \bibinfo{author}{M.~Menzies}, \bibinfo{author}{G.~Francis},
\newblock \bibinfo{title}{Structural clustering of volatility regimes for
  dynamic trading strategies},
\newblock \bibinfo{journal}{Applied Mathematical Finance} \bibinfo{volume}{28}
  (\bibinfo{year}{2021}) \bibinfo{pages}{236--274}.
  \DOIprefix\doi{10.1080/1350486x.2021.2007146}.
\bibitem[{Dro{\.{z}}d{\.{z}} et~al.(2020)Dro{\.{z}}d{\.{z}}, Kwapie{\'{n}},
  O{\'{s}}wi{\k{e}}cimka, Stanisz, and W{\k{a}}torek}]{Drod2020_entropy}
\bibinfo{author}{S.~Dro{\.{z}}d{\.{z}}}, \bibinfo{author}{J.~Kwapie{\'{n}}},
  \bibinfo{author}{P.~O{\'{s}}wi{\k{e}}cimka}, \bibinfo{author}{T.~Stanisz},
  \bibinfo{author}{M.~W{\k{a}}torek},
\newblock \bibinfo{title}{Complexity in economic and social systems:
  Cryptocurrency market at around {COVID}-19},
\newblock \bibinfo{journal}{Entropy} \bibinfo{volume}{22}
  (\bibinfo{year}{2020}) \bibinfo{pages}{1043}.
  \DOIprefix\doi{10.3390/e22091043}.
\bibitem[{James and Menzies(2021)}]{Jamesfincovid}
\bibinfo{author}{N.~James}, \bibinfo{author}{M.~Menzies},
\newblock \bibinfo{title}{Association between {COVID}-19 cases and
  international equity indices},
\newblock \bibinfo{journal}{Physica D: Nonlinear Phenomena}
  \bibinfo{volume}{417} (\bibinfo{year}{2021}) \bibinfo{pages}{132809}.
  \DOIprefix\doi{10.1016/j.physd.2020.132809}.
\bibitem[{Dro{\.{z}}d{\.{z}} et~al.(2021)Dro{\.{z}}d{\.{z}}, Kwapie{\'{n}}, and
  O{\'{s}}wi{\k{e}}cimka}]{Drod2021_entropy}
\bibinfo{author}{S.~Dro{\.{z}}d{\.{z}}}, \bibinfo{author}{J.~Kwapie{\'{n}}},
  \bibinfo{author}{P.~O{\'{s}}wi{\k{e}}cimka},
\newblock \bibinfo{title}{Complexity in economic and social systems},
\newblock \bibinfo{journal}{Entropy} \bibinfo{volume}{23}
  (\bibinfo{year}{2021}) \bibinfo{pages}{133}.
  \DOIprefix\doi{10.3390/e23020133}.
\bibitem[{James and Menzies(2021)}]{james2021_mobility}
\bibinfo{author}{N.~James}, \bibinfo{author}{M.~Menzies},
\newblock \bibinfo{title}{Efficiency of communities and financial markets
  during the 2020 pandemic},
\newblock \bibinfo{journal}{Chaos: An Interdisciplinary Journal of Nonlinear
  Science} \bibinfo{volume}{31} (\bibinfo{year}{2021}) \bibinfo{pages}{083116}.
  \DOIprefix\doi{10.1063/5.0054493}.
\bibitem[{James and Menzies(2022)}]{james2021_crypto2}
\bibinfo{author}{N.~James}, \bibinfo{author}{M.~Menzies},
\newblock \bibinfo{title}{Collective correlations, dynamics, and behavioural
  inconsistencies of the cryptocurrency market over time},
\newblock \bibinfo{journal}{Nonlinear Dynamics} \bibinfo{volume}{107}
  (\bibinfo{year}{2022}) \bibinfo{pages}{4001--4017}.
  \DOIprefix\doi{10.1007/s11071-021-07166-9}.
\bibitem[{James et~al.(2022{\natexlab{a}})James, Menzies, and
  Chin}]{james2022_stagflation}
\bibinfo{author}{N.~James}, \bibinfo{author}{M.~Menzies},
  \bibinfo{author}{K.~Chin},
\newblock \bibinfo{title}{Economic state classification and portfolio
  optimisation with application to stagflationary environments},
\newblock \bibinfo{journal}{Chaos, Solitons \& Fractals} \bibinfo{volume}{164}
  (\bibinfo{year}{2022}{\natexlab{a}}) \bibinfo{pages}{112664}.
  \DOIprefix\doi{10.1016/j.chaos.2022.112664}.
\bibitem[{James et~al.(2022{\natexlab{b}})James, Menzies, and
  Gottwald}]{james_georg}
\bibinfo{author}{N.~James}, \bibinfo{author}{M.~Menzies},
  \bibinfo{author}{G.~A. Gottwald},
\newblock \bibinfo{title}{On financial market correlation structures and
  diversification benefits across and within equity sectors},
\newblock \bibinfo{journal}{Physica A: Statistical Mechanics and its
  Applications} \bibinfo{volume}{604} (\bibinfo{year}{2022}{\natexlab{b}})
  \bibinfo{pages}{127682}. \DOIprefix\doi{10.1016/j.physa.2022.127682}.
\bibitem[{Vazquez(2006)}]{Vazquez2006}
\bibinfo{author}{A.~Vazquez},
\newblock \bibinfo{title}{Polynomial growth in branching processes with
  diverging reproductive number},
\newblock \bibinfo{journal}{Physical Review Letters} \bibinfo{volume}{96}
  (\bibinfo{year}{2006}). \DOIprefix\doi{10.1103/physrevlett.96.038702}.
\bibitem[{Mendes and Beims(2018)}]{Mendes2018}
\bibinfo{author}{C.~F. Mendes}, \bibinfo{author}{M.~W. Beims},
\newblock \bibinfo{title}{Distance correlation detecting {L}yapunov
  instabilities, noise-induced escape times and mixing},
\newblock \bibinfo{journal}{Physica A: Statistical Mechanics and its
  Applications} \bibinfo{volume}{512} (\bibinfo{year}{2018})
  \bibinfo{pages}{721--730}. \DOIprefix\doi{10.1016/j.physa.2018.08.028}.
\bibitem[{James and Menzies(2022)}]{james2021_hydrogen}
\bibinfo{author}{N.~James}, \bibinfo{author}{M.~Menzies},
\newblock \bibinfo{title}{Spatio-temporal trends in the propagation and
  capacity of low-carbon hydrogen projects},
\newblock \bibinfo{journal}{International Journal of Hydrogen Energy}
  \bibinfo{volume}{47} (\bibinfo{year}{2022}) \bibinfo{pages}{16775--16784}.
  \DOIprefix\doi{10.1016/j.ijhydene.2022.03.198}.
\bibitem[{Shang et~al.(2020)Shang, Yang, Moore, Ji, and Small}]{Shang2020}
\bibinfo{author}{K.~Shang}, \bibinfo{author}{B.~Yang}, \bibinfo{author}{J.~M.
  Moore}, \bibinfo{author}{Q.~Ji}, \bibinfo{author}{M.~Small},
\newblock \bibinfo{title}{Growing networks with communities: A distributive
  link model},
\newblock \bibinfo{journal}{Chaos: An Interdisciplinary Journal of Nonlinear
  Science} \bibinfo{volume}{30} (\bibinfo{year}{2020}) \bibinfo{pages}{041101}.
  \DOIprefix\doi{10.1063/5.0007422}.
\bibitem[{James et~al.(2022)James, Menzies, and Bondell}]{james2021_olympics}
\bibinfo{author}{N.~James}, \bibinfo{author}{M.~Menzies},
  \bibinfo{author}{H.~Bondell},
\newblock \bibinfo{title}{In search of peak human athletic potential: a
  mathematical investigation},
\newblock \bibinfo{journal}{Chaos: An Interdisciplinary Journal of Nonlinear
  Science} \bibinfo{volume}{32} (\bibinfo{year}{2022}) \bibinfo{pages}{023110}.
  \DOIprefix\doi{10.1063/5.0073141}.
\bibitem[{James and Menzies(2022)}]{james2022_guns}
\bibinfo{author}{N.~James}, \bibinfo{author}{M.~Menzies},
\newblock \bibinfo{title}{Dual-domain analysis of gun violence incidents in the
  {U}nited {S}tates},
\newblock \bibinfo{journal}{Chaos: An Interdisciplinary Journal of Nonlinear
  Science} \bibinfo{volume}{32} (\bibinfo{year}{2022}) \bibinfo{pages}{111101}.
  \DOIprefix\doi{10.1063/5.0120822}.
\bibitem[{Hethcote(2000)}]{Hethcote2000}
\bibinfo{author}{H.~W. Hethcote},
\newblock \bibinfo{title}{The mathematics of infectious diseases},
\newblock \bibinfo{journal}{{SIAM} Review} \bibinfo{volume}{42}
  (\bibinfo{year}{2000}) \bibinfo{pages}{599--653}.
  \DOIprefix\doi{10.1137/s0036144500371907}.
\bibitem[{Perc et~al.(2020)Perc, Miksi{\'{c}}, Slavinec, and
  Sto{\v{z}}er}]{Perc2020}
\bibinfo{author}{M.~Perc}, \bibinfo{author}{N.~G. Miksi{\'{c}}},
  \bibinfo{author}{M.~Slavinec}, \bibinfo{author}{A.~Sto{\v{z}}er},
\newblock \bibinfo{title}{Forecasting {COVID}-19},
\newblock \bibinfo{journal}{Frontiers in Physics} \bibinfo{volume}{8}
  (\bibinfo{year}{2020}) \bibinfo{pages}{127}.
  \DOIprefix\doi{10.3389/fphy.2020.00127}.
\bibitem[{James and Menzies(2022)}]{james2021_spectral}
\bibinfo{author}{N.~James}, \bibinfo{author}{M.~Menzies},
\newblock \bibinfo{title}{Optimally adaptive {B}ayesian spectral density
  estimation for stationary and nonstationary processes},
\newblock \bibinfo{journal}{Statistics and Computing} \bibinfo{volume}{32}
  (\bibinfo{year}{2022}) \bibinfo{pages}{45}.
  \DOIprefix\doi{10.1007/s11222-022-10103-4}.
\bibitem[{James and Menzies(2021)}]{james2021_MJW}
\bibinfo{author}{N.~James}, \bibinfo{author}{M.~Menzies},
\newblock \bibinfo{title}{A new measure between sets of probability
  distributions with applications to erratic financial behavior},
\newblock \bibinfo{journal}{Journal of Statistical Mechanics: Theory and
  Experiment} \bibinfo{volume}{2021} (\bibinfo{year}{2021})
  \bibinfo{pages}{123404}. \DOIprefix\doi{10.1088/1742-5468/ac3d91}.
\bibitem[{Moeckel and Murray(1997)}]{Moeckel1997}
\bibinfo{author}{R.~Moeckel}, \bibinfo{author}{B.~Murray},
\newblock \bibinfo{title}{Measuring the distance between time series},
\newblock \bibinfo{journal}{Physica D: Nonlinear Phenomena}
  \bibinfo{volume}{102} (\bibinfo{year}{1997}) \bibinfo{pages}{187--194}.
  \DOIprefix\doi{10.1016/s0167-2789(96)00154-6}.
\bibitem[{Sz{\'{e}}kely et~al.(2007)Sz{\'{e}}kely, Rizzo, and
  Bakirov}]{Szkely2007}
\bibinfo{author}{G.~J. Sz{\'{e}}kely}, \bibinfo{author}{M.~L. Rizzo},
  \bibinfo{author}{N.~K. Bakirov},
\newblock \bibinfo{title}{Measuring and testing dependence by correlation of
  distances},
\newblock \bibinfo{journal}{The Annals of Statistics} \bibinfo{volume}{35}
  (\bibinfo{year}{2007}) \bibinfo{pages}{2769--2794}.
  \DOIprefix\doi{10.1214/009053607000000505}.
\bibitem[{Mendes et~al.(2019)Mendes, da~Silva, and Beims}]{Mendes2019}
\bibinfo{author}{C.~F.~O. Mendes}, \bibinfo{author}{R.~M. da~Silva},
  \bibinfo{author}{M.~W. Beims},
\newblock \bibinfo{title}{Decay of the distance autocorrelation and {L}yapunov
  exponents},
\newblock \bibinfo{journal}{Physical Review E} \bibinfo{volume}{99}
  (\bibinfo{year}{2019}). \DOIprefix\doi{10.1103/physreve.99.062206}.
\bibitem[{James et~al.(2020)James, Menzies, Azizi, and Chan}]{James2020_nsm}
\bibinfo{author}{N.~James}, \bibinfo{author}{M.~Menzies},
  \bibinfo{author}{L.~Azizi}, \bibinfo{author}{J.~Chan},
\newblock \bibinfo{title}{Novel semi-metrics for multivariate change point
  analysis and anomaly detection},
\newblock \bibinfo{journal}{Physica D: Nonlinear Phenomena}
  \bibinfo{volume}{412} (\bibinfo{year}{2020}) \bibinfo{pages}{132636}.
  \DOIprefix\doi{10.1016/j.physd.2020.132636}.
\bibitem[{Karaivanov(2020)}]{Karaivanov2020}
\bibinfo{author}{A.~Karaivanov},
\newblock \bibinfo{title}{A social network model of {COVID}-19},
\newblock \bibinfo{journal}{{PLOS} {ONE}} \bibinfo{volume}{15}
  (\bibinfo{year}{2020}) \bibinfo{pages}{e0240878}.
  \DOIprefix\doi{10.1371/journal.pone.0240878}.
\bibitem[{Ge et~al.(2020)Ge, He, Lin, Zhu, and Zhuang}]{Ge2020}
\bibinfo{author}{J.~Ge}, \bibinfo{author}{D.~He}, \bibinfo{author}{Z.~Lin},
  \bibinfo{author}{H.~Zhu}, \bibinfo{author}{Z.~Zhuang},
\newblock \bibinfo{title}{Four-tier response system and spatial propagation of
  {COVID}-19 in {C}hina by a network model},
\newblock \bibinfo{journal}{Mathematical Biosciences} \bibinfo{volume}{330}
  (\bibinfo{year}{2020}) \bibinfo{pages}{108484}.
  \DOIprefix\doi{10.1016/j.mbs.2020.108484}.
\bibitem[{Xue et~al.(2020)Xue, Jing, Miller, Sun, Li, Estrada-Franco, Hyman,
  and Zhu}]{Xue2020}
\bibinfo{author}{L.~Xue}, \bibinfo{author}{S.~Jing}, \bibinfo{author}{J.~C.
  Miller}, \bibinfo{author}{W.~Sun}, \bibinfo{author}{H.~Li},
  \bibinfo{author}{J.~G. Estrada-Franco}, \bibinfo{author}{J.~M. Hyman},
  \bibinfo{author}{H.~Zhu},
\newblock \bibinfo{title}{A data-driven network model for the emerging
  {COVID}-19 epidemics in {W}uhan, {T}oronto and {I}taly},
\newblock \bibinfo{journal}{Mathematical Biosciences} \bibinfo{volume}{326}
  (\bibinfo{year}{2020}) \bibinfo{pages}{108391}.
  \DOIprefix\doi{10.1016/j.mbs.2020.108391}.
\bibitem[{Machado and Lopes(2020)}]{Machado2020}
\bibinfo{author}{J.~A.~T. Machado}, \bibinfo{author}{A.~M. Lopes},
\newblock \bibinfo{title}{Rare and extreme events: the case of {COVID}-19
  pandemic},
\newblock \bibinfo{journal}{Nonlinear Dynamics}  (\bibinfo{year}{2020}).
  \DOIprefix\doi{10.1007/s11071-020-05680-w}.
\bibitem[{Ngonghala et~al.(2020)Ngonghala, Iboi, and Gumel}]{Ngonghala2020}
\bibinfo{author}{C.~N. Ngonghala}, \bibinfo{author}{E.~A. Iboi},
  \bibinfo{author}{A.~B. Gumel},
\newblock \bibinfo{title}{Could masks curtail the post-lockdown resurgence of
  {COVID}-19 in the {US}?},
\newblock \bibinfo{journal}{Mathematical Biosciences} \bibinfo{volume}{329}
  (\bibinfo{year}{2020}) \bibinfo{pages}{108452}.
  \DOIprefix\doi{10.1016/j.mbs.2020.108452}.
\bibitem[{Cavataio and Schnell(2021)}]{Cavataio2021}
\bibinfo{author}{J.~Cavataio}, \bibinfo{author}{S.~Schnell},
\newblock \bibinfo{title}{Interpreting {SARS}-{CoV}-2 seroprevalence, deaths,
  and fatality rate {\textemdash} making a case for standardized reporting to
  improve communication},
\newblock \bibinfo{journal}{Mathematical Biosciences} \bibinfo{volume}{333}
  (\bibinfo{year}{2021}) \bibinfo{pages}{108545}.
  \DOIprefix\doi{10.1016/j.mbs.2021.108545}.
\bibitem[{N{\'{a}}raigh and Byrne(2020)}]{Nraigh2020}
\bibinfo{author}{L.~{\'{O}}. N{\'{a}}raigh}, \bibinfo{author}{{\'{A}}.~Byrne},
\newblock \bibinfo{title}{Piecewise-constant optimal control strategies for
  controlling the outbreak of {COVID}-19 in the {I}rish population},
\newblock \bibinfo{journal}{Mathematical Biosciences} \bibinfo{volume}{330}
  (\bibinfo{year}{2020}) \bibinfo{pages}{108496}.
  \DOIprefix\doi{10.1016/j.mbs.2020.108496}.
\bibitem[{Glass(2020)}]{Glass2020}
\bibinfo{author}{D.~H. Glass},
\newblock \bibinfo{title}{European and {US} lockdowns and second waves during
  the {COVID}-19 pandemic},
\newblock \bibinfo{journal}{Mathematical Biosciences} \bibinfo{volume}{330}
  (\bibinfo{year}{2020}) \bibinfo{pages}{108472}.
  \DOIprefix\doi{10.1016/j.mbs.2020.108472}.
\bibitem[{James et~al.(2021)James, Menzies, and
  Bondell}]{James2021_geodesicWasserstein}
\bibinfo{author}{N.~James}, \bibinfo{author}{M.~Menzies},
  \bibinfo{author}{H.~Bondell},
\newblock \bibinfo{title}{Understanding spatial propagation using metric
  geometry with application to the spread of {COVID}-19 in the {U}nited
  {S}tates},
\newblock \bibinfo{journal}{{EPL} (Europhysics Letters)} \bibinfo{volume}{135}
  (\bibinfo{year}{2021}) \bibinfo{pages}{48004}.
  \DOIprefix\doi{10.1209/0295-5075/ac2752}.
\bibitem[{Brummelen(2013)}]{haversine}
\bibinfo{author}{G.~V. Brummelen}, \bibinfo{title}{Heavenly Mathematics: The
  Forgotten Art of Spherical Trigonometry}, \bibinfo{publisher}{Princeton
  University Press}, \bibinfo{year}{2013}.
\bibitem[{Rousseeuw(1987)}]{Rousseeuw1987}
\bibinfo{author}{P.~J. Rousseeuw},
\newblock \bibinfo{title}{Silhouettes: A graphical aid to the interpretation
  and validation of cluster analysis},
\newblock \bibinfo{journal}{Journal of Computational and Applied Mathematics}
  \bibinfo{volume}{20} (\bibinfo{year}{1987}) \bibinfo{pages}{53--65}.
  \DOIprefix\doi{10.1016/0377-0427(87)90125-7}.
\bibitem[{Ritchie and Roser(2022)}]{CO2data}
\bibinfo{author}{H.~Ritchie}, \bibinfo{author}{M.~Roser},
  \bibinfo{title}{{CO}$_2$ and greenhouse gas emissions},
  \bibinfo{howpublished}{\url{https://ourworldindata.org/co2-and-other-greenhouse-gas-emissions}},
  \bibinfo{year}{2022}.

\end{thebibliography}
\end{document}